\newtheorem{claim}{Claim}
\newtheorem{fact}{Fact}
\title{Black-box Identity Testing for Low Degree Unmixed $\Sigma\Pi\Sigma\Pi(k)$ Circuits}
\author{Jinyu Huang\thanks{Department of Applied Mathematics, Illinois Institute of Technology,
         ({\tt jyhuangzju@gmail.com}).}}
\begin{document}

\maketitle

\begin{abstract}
A $\Sigma\Pi\Sigma\Pi(k)$ circuit
$C=\sum_{i=1}^kF_i=\sum_{i=1}^k\prod_{j=1}^{d_i}f_{ij}$ is unmixed
if for each $i\in[k]$, $F_i=f_{i1}(x_1)\cdots f_{in}(x_n)$, where
each $f_{ij}$ is a univariate polynomial given in the sparse
representation. In this paper, we give a polynomial time black-box
algorithm of identity testing for the low degree unmixed
$\Sigma\Pi\Sigma\Pi(k)$ circuits. In order to obtain the black-box
algorithm, we first show that a special class of low degree unmixed
$\Sigma\Pi\Sigma\Pi(k)$ circuits of size $s$ is $s^{O(k^2)}$-sparse.
Then we construct a hitting set $\mathcal{H}$ in polynomial time for
the low degree unmixed $\Sigma\Pi\Sigma\Pi(k)$ circuits from the
sparsity result above. The constructed hitting set is polynomial
size. Thus we can test whether the circuit or the polynomial $C$ is
identically zero by checking whether $C(a)=0$ for each
$a\in\mathcal{H}$. This is the first polynomial time black-box
algorithm for the low degree unmixed $\Sigma\Pi\Sigma\Pi(k)$
circuits, which also partly answers a question of Saxena \cite{SAX}.
\end{abstract}

\section{Introduction }

\label{sec1} A well known algebraic problem in algorithm design and
complexity theory is the Polynomial Identity Testing (PIT) problem:
given a multivariate polynomial $p(x_1,\cdots,x_n)$ over a field
$\mathbb{F}$, determine whether the polynomial is identically zero.
In many situations, an arithmetic circuit $C$ that computes the
polynomial $p(x_1,\cdots,x_n)$ is given as the input instead of the
polynomial $p(x_1,\cdots,x_n)$. Many other problems are related to
PIT. For example, primality testing \cite{AKS} or testing whether
there is a perfect matching \cite{LP} reduces to test whether a
particular polynomial is identically zero. Further, the proof of
$IP=PSPACE$ \cite{SHE} and the proof of the PCP theorem \cite{AS} in
complexity theory rely on the identity testing.

There is a randomized polynomial time algorithm for PIT, which was
given by Schwartz \cite{SCH} and Zippel \cite{ZIP}. Later, several
polynomial time randomized algorithms with fewer random bits were
introduced \cite{CK,LV}. But it is open to derandomize those
randomized polynomial time algorithms or design a deterministic
polynomial time or subexponential time algorithms for PIT. Kabanets
and Impagliazzo \cite{KI} proved that a polynomial time identity
testing algorithm implies that either $NEXP\not\subset P/poly$ or
Permanent is not computable by polynomial-size arithmetic circuits.
For the historic reason, it is hard to show the arithmetic circuit
lower bounds. Thus, researchers focus on PIT in some restricted
circuit models. Identity testing for sparse polynomials were studied
in \cite{KS}. There is a polynomial time algorithm if the sparsity
of the circuit is polynomial bounded. Deterministic algorithms for
some depth-$3$ circuits were known \cite{KS08,KS07}. Surveys
\cite{SAX,SY} have more information about the progress of the
identity testing.

Agrawal and Vinay \cite{AV} showed that a complete derandomization
of identity testing for depth-$4$ arithmetic circuits with
multiplication gates of small fanin implies a nearly complete
derandomization of general identity testing. As a result, it is
important and meaningful to study the depth-$4$ arithmetic circuits.
A polynomial $p(x_1,\cdots,x_n)$ of degree $poly(n)$ is called a low
degree polynomial. The arithmetic circuit $C$ that computes a low
degree polynomial is called a low degree circuit. Using the result
of Raz and Shpilka \cite{RS}, Saxena \cite{SAX08} gave a
deterministic white-box algorithm for depth-$4$ diagonal
$\Sigma\Pi\Sigma\Pi(k)$ circuits, which runs polynomial time for the
low degree circuits. Saraf and Volkovich \cite{SV11} recently
presented a deterministic black-box polynomial time algorithm for
the depth-$4$ multilinear $\Sigma\Pi\Sigma\Pi(k)$ circuits. Other
results concerned with depth-$4$ circuits can be found in \cite{SY}.

The known algorithm for $\Sigma\Pi\Sigma\Pi(k)$ circuits whose
multiplication gate have unmixed variables is non black-box. So it
is interesting whether there are black-box algorithms for them. In
fact, Sexena leaves this as an open problem in the survey
\cite{SAX}. In this paper, we resolve this problem for the low
degree $\Sigma\Pi\Sigma\Pi(k)$ circuits.

\subsection{Main Results}
\label{sec1.2}Similar to the result in \cite{SV11}, we first show
that each multiplication gate (in the second level) of the
pseudo-simple minimal low degree unmixed $\Sigma\Pi\Sigma\Pi(k)$
circuit is $s^{O(k^2)}$ sparse where $s$ is the size of the circuit.
Let $C=\sum_{i=1}^kF_i=\sum_{i=1}^k\prod_{j=1}^{d_i}f_{ij}$ be the
$\Sigma\Pi\Sigma\Pi(k)$ circuit. Roughly speaking, $C$ is
pseudo-simple if there is no $f_{ij}$ appears in all $F_i$. The
formal definition will be given later. The circuit $C$ is unmixed if
for each $i\in[k]$, $F_i=f_{i1}(x_1)\cdots f_{in}(x_n)$, where each
$f_{ij}$ is a univariate polynomial that is given in the sparse
representation. Based on this sparsity result, we obtain a
polynomial time black-box algorithm for the low degree unmixed
$\Sigma\Pi\Sigma\Pi(k)$ circuit.

\subsection{Outlines}
In section \ref{sec2}, we give required definitions, lemmas and
theorems. The sparsity bound for the low degree unmixed
$\Sigma\Pi\Sigma\Pi(k)$ circuit is given in section \ref{sec3}. We
present the black-box identity testing algorithm in section
\ref{sec4}.

\section{Preliminaries}

\label{sec2}

\subsection{Polynomials}
\label{subsec21}

The symbol $[n]$ denotes the set $\{1,\cdots,n\}$. Let $\mathbb{F}$
be the underlying field and let $\mathbb{\bar{F}}$ be its algebraic
closure. We assume that $\mathbb{F}$ contains sufficient number of
elements. Let $\mathbb{F}[x_1,\cdots,x_n]$ be a ring of polynomials
with coefficients in $\mathbb{F}$. Given a nonzero polynomial
$P\in\mathbb{F}[x_1,\cdots,x_n]$, it can be written in exactly one
way in the form

\begin{equation}\label{eq1}
P=\sum_{I}\alpha_{I}x^{I}
\end{equation}
where each coefficient $\alpha_{I}\neq0$ and
\begin{displaymath}
x^I:=x_1^{i_1}\cdots x_n^{i_n}
\end{displaymath}
with $I=(i_1,\cdots,i_n)$. The polynomial $P(x_1,\cdots,x_n)$
depends on variable $x_i$ if there are $c\in\mathbb{\bar{F}}^n$ and
$b\in\mathbb{\bar{F}}$ such that $P(c_1,\cdots,c_{i},\cdots,c_n)\neq
P(c_1,\cdots,b,\cdots,c_n)$. Then define $var(P):=\{i: \text{ P
depends on } x_i\}$. Let $P|_{x_A=a_A}$ be the polynomial with
$x_i=a_i$ for every $i\in A\subseteq[n]$. Given a multi-index
$I=(i_1,\cdots,i_n)$, define $I_{A=0}$ to be the multi-index by
setting $i_a=0$ for every $a\in A$. We define the sparsity of the
polynomial as follows.
\begin{definition}
The sparsity of the polynomial $P$ is the number of (nonzero)
monomials in $P$, which is represented by $||P||$. Given
$A\subseteq[n]$, define $||P||_A=|\{I_{A=0}: \alpha_I\neq0\}|$.
\end{definition}

If the polynomial contains a constant, assume that its multi-index
$I$ is $(0,\cdots,0)$. If $x_i^0$ is in a monomial, we can remove
$x_i^0$ from the monomial. There is an example for the sparsity. Let
$P=x_1^2x_2^5x_3+x_1^3x_2x_3^6+x_1-x_1$, then $||P||=2$ and
$||P||_A=2$ where $A=\{x_2,x_3\}$.

Given a subset $A=\{a_1,\cdots,a_k\}\subseteq[n]$ and a multi-index
$I=(i_1,\cdots,i_n)$, define $I_A:=(i_{a_1},\cdots,i_{a_k})$. We can
eliminate the variables with zero index from each monomial in the
polynomial. Then (\ref{eq1}) can also be written as
\begin{equation}\label{eq2}
P=\sum_{I_A}\alpha_{I_A}x^{I_A}
\end{equation}
where $x^{I_A}=x_{a_1}^{i_{a_1}}\cdots x_{a_k}^{i_{a_k}}$.



\begin{lemma}
Let $\{F_i\}$ and $\{G_i\}$ be sets of polynomials in
$\mathbb{F}[x_1,\cdots,x_n]$ with $F_i,G_i\not\equiv0$. Then
\begin{displaymath}
||gcd(F_1\cdot G_1,\cdots,F_k\cdot
G_k)||\leq||gcd(F_1,\cdots,F_k)||\cdot||G_1||\cdots||G_k||
\end{displaymath}
\end{lemma}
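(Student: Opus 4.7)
The plan is to reduce the bound to the standard submultiplicativity of sparsity under polynomial products, namely $||PQ||\leq||P||\cdot||Q||$. Set $D:=gcd(F_1G_1,\ldots,F_kG_k)$ and $D':=gcd(F_1,\ldots,F_k)$. Since $D'$ divides every $F_i$, it also divides every product $F_iG_i$, so $D'\mid D$; write $D=D'\cdot H$. Factoring $F_i=D'\cdot F_i'$ with $gcd(F_1',\ldots,F_k')=1$, the divisibility $D\mid F_iG_i$ becomes $H\mid F_i'G_i$ for every $i$.

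The crucial step is then to show $H\mid G_1G_2\cdots G_k$. I would argue prime by prime in the UFD $\mathbb{F}[x_1,\ldots,x_n]$: for any irreducible factor $p$ of $H$ appearing with multiplicity $m$, the coprimality $gcd(F_1',\ldots,F_k')=1$ forces some index $i^*$ with $p\nmid F_{i^*}'$. Since $p^m\mid F_{i^*}'G_{i^*}$, unique factorization gives $p^m\mid G_{i^*}$, and in particular $p^m\mid\prod_iG_i$. Running this over every prime-power factor of $H$ yields $H\mid G_1G_2\cdots G_k$, whence $||H||\leq||G_1G_2\cdots G_k||\leq\prod_i||G_i||$ by repeated submultiplicativity.

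Combining the two steps gives $||D||=||D'\cdot H||\leq||D'||\cdot||H||\leq||gcd(F_1,\ldots,F_k)||\cdot\prod_i||G_i||$, which is exactly the claimed inequality. The main obstacle is the prime-by-prime transfer in the middle step: without the coprimality of the $F_i'$, a prime power of $H$ could be distributed between $F_i'$ and $G_i$ for every single $i$, and one could not conclude that $p^m$ sits entirely inside some $G_{i^*}$. Exploiting $gcd(F_1',\ldots,F_k')=1$ to pin down, for each prime of $H$, an index where that prime is wholly absorbed into $G_{i^*}$ is the essential observation; the rest is a direct application of $||PQ||\leq||P||\cdot||Q||$.
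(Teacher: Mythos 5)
Your setup is sound: $D'\mid D$, the reduction to $H\mid F_i'G_i$ for all $i$, and the prime-by-prime argument using $\gcd(F_1',\ldots,F_k')=1$ to conclude $H\mid G_1\cdots G_k$ are all correct. The gap is the very last inference, where you pass from $H\mid G_1\cdots G_k$ to $||H||\leq||G_1\cdots G_k||$. Sparsity is not monotone under divisibility: $1+x+\cdots+x^{p-1}$ divides $x^p-1$, yet the divisor has sparsity $p$ while the dividend has sparsity $2$. Submultiplicativity bounds $||PQ||$ from above by $||P||\cdot||Q||$, but it gives no control of $||P||$ in terms of $||PQ||$, and that reverse direction is exactly what your ``whence'' needs.

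This is not a repairable slip within the stated hypotheses, because the lemma as written is actually false. Take $k=2$, $F_1=1$, $F_2=1+x+\cdots+x^{p-1}$, $G_1=x^p-1$, $G_2=1$: then $\gcd(F_1,F_2)=1$, so the right-hand side equals $1\cdot 2\cdot 1=2$, while $\gcd(F_1G_1,F_2G_2)=1+x+\cdots+x^{p-1}$ has sparsity $p>2$. The paper cites Saraf--Volkovich (Observation 2.8), which lives in the multilinear setting; there any divisor of a multilinear polynomial is a variable-disjoint factor, so $P\mid Q$ does force $||P||\leq||Q||$, and that is the hidden assumption your argument silently imports. So the coprimality trick you singled out as ``the essential observation'' is fine and is in the spirit of the intended proof, but the real obstruction is the transfer from divisibility to sparsity, which requires a structural hypothesis (multilinearity, or some variable-disjointness of factors) that neither your argument nor the paper's statement supplies.
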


The proof is in \cite{SV11} (Observation $2.8$).

The following lemma is a corollary of the Shearer's Lemma.

\begin{lemma}\label{shear}
Let $P\in\mathbb{F}[x_1,\cdots,x_n]$ be a polynomial. Given $k$
disjoint sets $A_1,\cdots,A_k\subseteq[n]$ with $k\geq2$, we have
\begin{displaymath}
||P||^{k-1}\leq\prod_{j=1}^k||P||_{A_j}
\end{displaymath}
\end{lemma}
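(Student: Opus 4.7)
The plan is to derive this inequality from Shearer's entropy lemma applied to the uniform distribution on the support of $P$. First, I would let $S=\{I:\alpha_I\neq 0\}$ denote the set of multi-indices of monomials appearing in $P$, so that $|S|=||P||$, and let $X=(X_1,\ldots,X_n)$ be a multi-index drawn uniformly at random from $S$. Then $H(X)=\log||P||$.

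Next, I would introduce the complementary sets $B_j:=[n]\setminus A_j$. Since $A_1,\ldots,A_k$ are pairwise disjoint, each coordinate $i\in[n]$ belongs to at most one $A_j$, hence to at least $k-1$ of the sets $B_j$. Shearer's lemma therefore applies to the family $\{B_1,\ldots,B_k\}$ with covering multiplicity at least $k-1$, giving
$$(k-1)\,H(X)\;\leq\;\sum_{j=1}^{k} H(X_{B_j}).$$

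The key translation step is to identify the right-hand side with the sparsities $||P||_{A_j}$. By definition, $||P||_{A_j}=|\{I_{A_j=0}:\alpha_I\neq 0\}|$, which is exactly the number of distinct values the projection $X_{B_j}$ can take as $X$ ranges over $S$ (zeroing out the $A_j$-coordinates is the same as retaining only the $B_j$-coordinates). Since the entropy of a random variable is bounded by the logarithm of the size of its support, $H(X_{B_j})\leq\log||P||_{A_j}$. Substituting into Shearer's inequality yields
$$(k-1)\log||P|| \;\leq\; \sum_{j=1}^{k}\log||P||_{A_j},$$
and exponentiating gives the claimed bound.

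There is no real obstacle here; the only point requiring care is the bookkeeping between the ``zero-out'' convention in the definition of $||P||_A$ and the ``restrict-to-complement'' convention needed to invoke Shearer's lemma, together with verifying that the disjointness of the $A_j$ gives covering multiplicity exactly $k-1$ (rather than something larger) for the complementary family.
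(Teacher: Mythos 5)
Your proof is correct and takes exactly the route the paper intends: the paper states the lemma as ``a corollary of Shearer's Lemma'' and cites Saraf--Volkovich (Corollary 2.6), whose argument is precisely this entropy computation — uniform distribution on the support of $P$, the complementary family $B_j=[n]\setminus A_j$ as a $(k-1)$-cover of $[n]$ by disjointness of the $A_j$, and the bound $H(X_{B_j})\le\log||P||_{A_j}$ followed by exponentiation. Your bookkeeping between the ``zero-out'' convention in $||P||_{A_j}$ and the projection onto $B_j$ is also the right thing to check, and you handled it correctly.
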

The proof is in \cite{SV11} (Corollary $2.6$). Now we define an
operator
\begin{definition}
Given $A\subseteq[n]$, $a\in\mathbb{\bar{F}}^n$ and
$P,Q\in\mathbb{F}[x_1,\cdots,x_n]$, define $D_{x_A=\alpha_A}(P,Q)$
as
\begin{displaymath}
D_{x_A=\alpha_A}(P,Q):=P\cdot
Q|_{x_A=\alpha_A}-P|_{x_A=\alpha_A}\cdot Q
\end{displaymath}
\end{definition}

\subsection{Low Degree Circuits}
\label{subsec22}

A $\Sigma\Pi\Sigma\Pi(k)$ circuit $C$ is a depth-$4$ circuit has
four alternating layers of addition and multiplication gates and the
number of input to the top addition gate is $k$. The
$\Sigma\Pi\Sigma\Pi(k)$ circuit $C$ with size $s$ computes a
polynomial in the form
\begin{displaymath}
C(x)=\sum_{i=1}^kF_i(x)=\sum_{i=1}^k\prod_{j=1}^{d_i}f_{ij}(x)
\end{displaymath}
where each $f_{ij}$ is a $s$-sparse polynomials (sparsity is at most
$s$). The circuit $C$ is called a low degree circuit, if it computes
a polynomial $P\in\mathbb{F}[x_1,\cdots,x_n]$ with degree at most
$poly(n)$. Without loss of generality, we can replace $poly(n)$ with
$O(n)$ or just $n$. For every $A\subseteq[k]$, define a subcircuit
of $C$ as $C_A:=\sum_{i\in A}F_i$. The circuit $C$ is minimal if
$C_A\not\equiv0$ for each $\emptyset\subsetneq A\subsetneq[k]$.
Given a polynomial $P\in\mathbb{F}[x_1,\cdots,x_n]$, the circuit $C$
is $P$-minimal if no proper subcircuit $C_A$ has an indecomposable
factor $P$.

Let $C$ be the $\Sigma\Pi\Sigma\Pi(k)$ circuit whose multiplication
gates have unmixed variables. Next, we define the pseudo greatest
common divisors for the unmixed polynomials. Suppose that
$F_1=f_{11}(x_1)\cdots f_{1n}(x_n)$ and $F_2=f_{21}(x_1)\cdots
f_{2n}(x_n)$ where each $f_{ij}(x_j)$ is a monic polynomial, let
$S_1:=\{f_{11}(x_1),\cdots,f_{1n}(x_n)\}$ and
$S_2:=\{f_{21}(x_1),\cdots,f_{2n}(x_n)\}$. Then the pseudo greatest
common divisor of $F_1$ and $F_2$ is
\begin{equation*}
\gcd(F_1,F_2)_{pseudo}=\prod_{f_i(x_i)\in S}f_i(x_i)
\end{equation*}
where $S=S_1\cap S_2$. Similarly, let
$S_i:=\{f_{i1}(x_1),\cdots,f_{in}(x_n)\}$ where
$F_i=f_{i1}(x_1)\cdot f_{i2}(x_2)\cdots\cdot f_{in}(x_n)$ for $1\leq
i\leq d$ ($d>2$). Define the pseudo greatest common divisor of
$F_1,\cdots,F_d$ for $d>2$ as
\begin{displaymath}
\gcd(F_1,\cdots,F_d)_{pseudo}=\prod_{f_i(x_i)\in S}f_i(x_i)
\end{displaymath}
where $S=\bigcap_{i=1}^dS_i$. Generally, $\gcd(F_1,F_2)_{pseudo}$ is
different from $\gcd(F_1,F_2)$. The unmixed circuit $C$ is
pseudo-simple if $\gcd(C)_{pseudo}=\gcd(F_1,\cdots,F_k)_{pseudo}=1$
where $C=\sum_{i=1}^kF_i$. The simplification of $C$ is defined as
$sim(C):=C/\gcd(C)_{pseudo}$. The polynomial
$P\in\mathbb{F}[x_1,\cdots,x_n]$ is said to be decomposable if it
can be written as $P(X)=Q(X_A)\cdot G(X_{\bar{A}})$ where
$\emptyset\subsetneq A\subsetneq[k]$. Otherwise, $P$ is called
indecomposable. The indecomposable factors of a polynomial
$P\in\mathbb{F}[x_1,\cdots,x_n]$ is $P=P_1(X_{I_1})\cdot
P_2(X_{I_2})\cdots,P_d(X_{I_d})$ such that all $\emptyset\subsetneq
I_j\subsetneq[n]$ are disjoint sets of indices and the $P_i$-s are
indecomposable for $1\leq i\leq d$. Let $f|_{in}C$ denote that f is
an indecomposable factor of $C$. The following lemma will be used to
prove the sparsity bound.

\begin{lemma}
\label{lem26} Let $P$ be a non-constant univariate low-degree
polynomial and let $Q\in\mathbb{F}[x_1,\cdots,x_n]$ be a low-degree
polynomial. Further, let $c\in\mathbb{\bar{F}}$ where $P(c)\neq0$.
Suppose that $var(P)=\{i\}$, then $D_{x_i=c}(P,Q)\equiv0$ if and
only if $P|_{in}Q$.
\end{lemma}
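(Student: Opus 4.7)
The plan is to exploit the fact that $P$ depends only on $x_i$, so $P|_{x_i=c}$ reduces to the scalar $P(c) \in \bar{\mathbb{F}}$ and the operator collapses to
\[
D_{x_i=c}(P,Q) \;=\; P \cdot Q|_{x_i=c} \;-\; P(c) \cdot Q.
\]
Both directions will follow from direct manipulation of this identity, with the hypothesis $P(c)\neq 0$ providing the invertibility needed in one of them.

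For the ``if'' direction, I would assume $P\,|_{in}\,Q$. Since $var(P)=\{i\}$ and the indecomposable factors of $Q$ live on disjoint variable blocks, the block containing $x_i$ is exactly $\{i\}$, so $Q = P(x_i)\cdot R$ where $R$ is the product of the remaining indecomposable factors and therefore does not involve $x_i$. Then $Q|_{x_i=c} = P(c)\cdot R$, and substituting into the simplified formula gives $P\cdot P(c)\cdot R - P(c)\cdot P\cdot R = 0$.

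For the converse, I would assume $D_{x_i=c}(P,Q)\equiv 0$, which, after dividing by the nonzero scalar $P(c)$, rearranges to
\[
Q \;=\; \frac{1}{P(c)}\cdot P(x_i)\cdot Q|_{x_i=c}.
\]
The right-hand side exhibits $Q$ as a product of two polynomials on disjoint variable sets: $P(x_i)$, which depends only on $x_i$, and $\tfrac{1}{P(c)}Q|_{x_i=c}$, which is free of $x_i$. Because a non-constant univariate polynomial admits no nontrivial variable-disjoint factorization, $P$ is itself indecomposable. Refining the $x_i$-free cofactor into its own indecomposable pieces then yields the variable-disjoint indecomposable factorization of $Q$ with $P$ appearing among its factors, so $P\,|_{in}\,Q$.

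The main subtlety, and the only point worth real care, is to keep straight that ``indecomposable factor'' in this paper means a factor in the variable-disjoint factorization of $Q$ rather than just an algebraic divisor; the argument above addresses this by noting that a polynomial with support $\{i\}$ automatically sits inside the $\{i\}$-block of any variable-disjoint factorization and is indecomposable on that block. Beyond that observation, the proof is a one-line computation in each direction.
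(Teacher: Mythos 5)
Your proposal is correct and follows essentially the same route as the paper's own proof: both directions hinge on replacing $P|_{x_i=c}$ by the nonzero scalar $P(c)$, rearranging $D_{x_i=c}(P,Q)\equiv 0$ to $Q = P(c)^{-1}\,P\cdot Q|_{x_i=c}$ for the forward direction, and computing $D_{x_i=c}(P,\,P\cdot H)$ directly for the reverse. The only difference is presentational: you spell out why $P$, being non-constant and univariate, is automatically indecomposable and why refining the $x_i$-free cofactor completes the variable-disjoint factorization, whereas the paper compresses this to a one-line remark; your version is marginally more careful but contains no new idea.
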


\begin{proof}
Suppose that $D_{x_i=c}(P,Q)\equiv0$. Since $P(c)\neq0$, we have
$P(c)=a$ where $a\in\mathbb{\bar{F}}$ is a nonzero element of the
field. Thus from $D_{x_i=c}(P,Q)\equiv0$, we have
\begin{displaymath}
P\cdot Q|_{x_i=c}= a\cdot Q
\end{displaymath}
Since $Q|_{x_i=c}$ does not depend on $x_i$ and $P$ is a univariate
polynomial, $P|_{in}Q$.

Now suppose that $P|_{in}Q$, we show $D_{x_i=c}(P,Q)\equiv0$. Since
$P|_{in}Q$, we have $Q=P\cdot H$ where $P$ and $H$ are variable
disjoint factors of $Q$. Further, $P(c)=a$ for some nonzero
$a\in\mathbb{\bar{F}}$. Thus
\begin{eqnarray*}
D_{x_i=c}(P,Q) &=& P\cdot Q|_{x_i=c}-P|_{x_i=c}\cdot Q\\
               &=& P\cdot a\cdot H-a\cdot P\cdot H\\
               &\equiv& 0
\end{eqnarray*}
\end{proof}

The following lemma characterize the pseudo greatest common
divisors.

\begin{lemma}\label{lem27}
Let $P_1,\cdots,P_k\in\mathbb{F}[x_1,\cdots,x_n]$ be non-constant
low degree unmixed polynomials. Let $P_i=\beta_i\cdot
g_{i1}(x_1)\cdot g_{i2}(x_2)\cdots g_{in}(x_n)$ such that
$\beta_i\in\mathbb{F}$ and $g_{i1}$, $g_{i2},\cdots,g_{in}$ are
univariate monic polynomials for $2\leq i\leq k$. Let
$P_1=\beta\cdot g_1(x_1)\cdot g_2(x_2)\cdots g_n(x_n)$ such that
$\beta$ is a field element and $g_1$, $g_2,\cdots,g_n$ are
univariate monic polynomials. Then
$\gcd(P_1,\cdots,P_k)_{pseudo}\neq1$ if and only if there exists a
nonconstant factor $g_d(x_d)$($1\leq d\leq n$) of $P_1$ such that
$\prod_{1\leq j\leq n}(g_d-g_{ij})\equiv0$ for each $2\leq i\leq k$.
\end{lemma}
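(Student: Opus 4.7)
The plan is to unfold the definition of the pseudo-gcd and reduce both directions to a single elementary observation. Recall that $\gcd(P_1,\ldots,P_k)_{pseudo}$ is the product of the monic univariate factors lying in $S:=\bigcap_{i=1}^k S_i$, where $S_1=\{g_1(x_1),\ldots,g_n(x_n)\}$ and $S_i=\{g_{i1}(x_1),\ldots,g_{in}(x_n)\}$ for $i\geq 2$. The \emph{key observation} I would use throughout is: if $h(x_d)$ is a non-constant univariate polynomial in $x_d$ and $g(x_j)$ is any univariate polynomial in $x_j$, then $h(x_d)-g(x_j)\equiv 0$ forces $j=d$ and $h=g$; otherwise the genuine dependence of $h$ on $x_d$ could not be cancelled.

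For the ``only if'' direction, assume $\gcd(P_1,\ldots,P_k)_{pseudo}\neq 1$. Then $S$ contains some non-constant monic polynomial $h$. Since $h\in S_1$ and $h$ is non-constant, the key observation pins down a unique index $d\in[n]$ with $h=g_d$; this supplies the non-constant factor of $P_1$ named in the statement. For each $i\geq 2$, since $h=g_d\in S_i$, there exists $j_i$ with $g_d=g_{i,j_i}$, and the key observation again forces $j_i=d$ and $g_{id}=g_d$. Consequently, the $j=d$ term of $\prod_{j=1}^n(g_d-g_{ij})$ vanishes, so the whole product is identically zero for every $i\geq 2$.

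For the ``if'' direction, suppose $g_d$ is a non-constant factor of $P_1$ with $\prod_{j=1}^n(g_d-g_{ij})\equiv 0$ for each $i\geq 2$. Because $\mathbb{F}[x_1,\ldots,x_n]$ is an integral domain, for each such $i$ some factor $g_d-g_{i,j_i}$ must be identically zero. The key observation yet again gives $j_i=d$ and $g_{id}=g_d$, so $g_d$ belongs to $S_i$ for every $i\geq 2$ (and trivially to $S_1$). Hence $g_d\in S$ is a non-constant monic polynomial dividing $\gcd(P_1,\ldots,P_k)_{pseudo}$, which is therefore not $1$. I do not anticipate any substantive obstacle in this argument: the only subtle point is the univariate cancellation claim, which immediately rules out spurious matches $g_d(x_d)=g_{ij}(x_j)$ across distinct variables and makes the bijection between common factors in $S$ and shared univariate components tight.
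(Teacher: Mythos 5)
Your proof is correct and takes essentially the same route as the paper's: both directions reduce to the observation that a nonconstant univariate polynomial in $x_d$ can only equal another factor if that factor is in the same variable $x_d$, combined with the integral-domain fact that a vanishing product has a vanishing factor. Your version simply makes this cross-variable cancellation step (the ``key observation'') explicit, whereas the paper invokes it implicitly when writing $g_{id}(x_d)$ directly in the forward direction and ``there exists a $j$'' in the backward direction.
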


\begin{proof}
Suppose $G:=\gcd(P_1,\cdots,P_k)_{pseudo}\neq1$, then there exists a
factor $g_d(x_d)$ of $P_1$ such that $g_d(x_d)|_{in}G$. Hence,
$g_d(x_d)$ is a common factor of $P_1, P_2,\cdots,P_k$. Then there
exists $g_{id}(x_d)$ such that $g_d(x_d)-g_{id}(x_d)\equiv0$ for
each $2\leq i\leq k$. Thus we have $\prod_{1\leq j\leq
n}(g_d-g_{ij})\equiv0$ for each $2\leq i\leq k$.

Now suppose that there exists a factor $g_d(x_d)$($1\leq d\leq n$)
of $P_1$ such that $\prod_{1\leq j\leq n}(g_d-g_{ij})\equiv0$ for
each $2\leq i\leq k$. Then there exists a $j$ such that $g_d\equiv
g_{ij}$ for each $2\leq i\leq k$. Thus $g_d$ is a common factor of
$P_1, P_2,\cdots,P_k$. As a result, $g_d$ is a factor of
$\gcd(P_1,\cdots,P_k)_{pseudo}$. Since $g_d\neq1$, we have
$\gcd(P_1,\cdots,P_k)_{pseudo}\neq1$.
\end{proof}

\subsection{Hitting Sets and Generators}
\label{subsec23}

A set $\mathcal{H}\subseteq\mathbb{F}^n$ is a hitting set for a
circuit class $\mathcal{M}$, if given any non-zero circuit
$P\in\mathcal{M}$, there exists $a\in\mathcal{H}$ such that
$P(a)\neq0$. A generator for the circuit class $\mathcal{M}$ is a
polynomial mapping
$\mathcal{G}=(\mathcal{G}_1,\cdots,\mathcal{G}_n):\mathbb{F}^m\rightarrow\mathbb{F}^n$
such that for each nonzero $n$-variate polynomial $P\in\mathcal{M}$,
we have $P(\mathcal{G})\not\equiv0$. In the identity testing,
generators and hitting sets play the same role. The following lemma
is about the generator for the low degree polynomials.

\begin{lemma}
\label{lem1} There is a generator
$\mathcal{L}_s:=(\mathcal{L}_{1,s},\cdots,\mathcal{L}_{n,s}):\mathbb{F}^q\rightarrow\mathbb{F}^n$
for $s$-sparse low degree polynomials. The individual degrees of
every $\mathcal{L}_{i,s}$ are bounded by $n-1$ and $q=O(\log_ns)$.
\end{lemma}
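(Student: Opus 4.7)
The plan is to build $\mathcal{L}_s$ as a hashing-based polynomial substitution that fools every $s$-sparse low-degree polynomial using only $q = O(\log_n s)$ auxiliary variables. The naive monomial substitution $x_i \mapsto \prod_j y_j^{a_{ij}}$ does not suffice, since requiring $P(\mathcal{L}_s)\not\equiv 0$ for every $s$-sparse $P$ would force the induced hash to be injective on the entire low-degree monomial grid, which in turn forces $q = \Omega(n)$. The construction must therefore allow each $\mathcal{L}_{i,s}$ to be a genuine polynomial with multiple terms, exploiting algebraic cancellation rather than pure combinatorial injectivity.

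A natural template, in the spirit of the Shpilka--Volkovich generator, is to set
\[
\mathcal{L}_{i,s}(y_1,z_1,\ldots,y_q,z_q) := \sum_{j=1}^{q} L_i(y_j)\, z_j,
\]
where $L_i(y):=\prod_{\ell\neq i}(y-\ell)/(i-\ell)$ is the degree-$(n-1)$ Lagrange basis polynomial with $L_i(\ell)=\delta_{i\ell}$ on $[n]$. Each $\mathcal{L}_{i,s}$ then has individual degree $\le n-1$ in every variable, while $2q$ is still of the claimed order. The generator property is obtained by fusing this Lagrange ``selector'' structure with a Chinese-Remainder-Theorem hashing argument: choose hash functions $\phi_1,\ldots,\phi_q:[n]\to[n]$ built from $q$ primes $p_j\le n$ (e.g.\ via polynomial hashes $f_j(i) = i^j \bmod p_j$, whose Vandermonde structure makes $\Pr_{p\le n}\bigl[\sum_i\Delta_i f_j(i)\equiv 0\pmod{p}\bigr] = O(1/n)$ for every nonzero low-degree difference vector $\Delta$) so that for every pair of distinct multi-indices arising in a size-$s$ support, at least one $\phi_j$ distinguishes them. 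A union bound over the $\binom{s}{2}$ monomial pairs in any fixed support, balanced against the collision probability $n^{-\Omega(q)}$, pins the target at $q = O(\log_n s)$.

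The main obstacle is verifying that the Lagrange-plus-hashing construction genuinely preserves non-vanishing of $P(\mathcal{L}_s)$ for every $s$-sparse low-degree $P$ simultaneously, not just for one fixed $P$. The argument proceeds by evaluating the $y_j$'s at carefully chosen points that, using the $\delta_{i\ell}$-property of the $L_i$'s, ``isolate'' a single monomial of $P$'s support; the hashing guarantee then ensures the isolated contribution survives distinctly, so the substituted polynomial is non-zero. The low-degree hypothesis enters essentially both in bounding the entries of $\Delta$ (for the collision estimate) and in keeping the overall degree of $P(\mathcal{L}_s)$ polynomial, so that a derandomization by exhaustive search over the $\mathrm{poly}(s)$ admissible prime tuples produces a single explicit $\mathcal{L}_s$ in polynomial time.
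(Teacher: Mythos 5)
The paper proves this lemma by a simple composition of cited results: Fact~\ref{fact1} (Klivans--Spielman) gives a hitting set of size $\mathrm{poly}(n,s)$ for $s$-sparse low-degree polynomials, and Fact~\ref{fact2} (from KMSV) converts any size-$N$ hitting set into a generator with individual degrees at most $n-1$ and $q=\lceil\log_n N\rceil$ variables, by multivariate interpolation of the hitting set over a $q$-dimensional grid of side $n$. Plugging in $N=\mathrm{poly}(n,s)$ immediately gives $q=O(\log_n s)$; there is nothing more to the paper's proof.

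Your proposal is a genuinely different route --- a direct construction rather than a composition of cited facts --- but it has a real gap: the central verification is described, not carried out. You write down the Shpilka--Volkovich template $\mathcal{L}_{i,s}=\sum_{j=1}^q L_i(y_j)z_j$, which does satisfy the claimed degree bound, but which is known as a generator for polynomials depending on at most $q$ variables, not for arbitrary $s$-sparse low-degree polynomials with $q=O(\log_n s)$. To bridge this you invoke a Chinese-Remainder hashing argument in the style of Klivans--Spielman, but that construction is a monomial substitution $x_i\mapsto y^{\phi(i)}$, structurally unrelated to the Lagrange sum you wrote; the proposed ``fusion'' of the two is never made explicit. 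The key steps --- ``evaluate the $y_j$ at carefully chosen points to isolate a single monomial'' and ``the hashing guarantee then ensures the isolated contribution survives distinctly'' --- are stated as intentions, and they are exactly where a proof would have to do work. The closing sentence about ``derandomization by exhaustive search over the $\mathrm{poly}(s)$ admissible prime tuples'' also slips from describing a single polynomial map $\mathcal{L}_s$ to describing a hitting set, which is a different object. Without a concrete composite substitution together with a proof that it preserves non-vanishing for every $s$-sparse low-degree polynomial, the lemma is not established; the paper avoids all of this by citing the hitting set of Fact~\ref{fact1} and the interpolation conversion of Fact~\ref{fact2}.
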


The Lemma \ref{lem1} follows from the following two facts.

\begin{fact}
\label{fact1} We have a hitting set $\mathcal{H}$ with cardinality
$poly(n,s,d)$ for each non-zero $n$-variate $s$-sparse polynomial
with degree $d$ over a field $\mathbb{F}$.
\end{fact}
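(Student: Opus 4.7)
The plan is to use the classical Klivans--Spielman style construction, which reduces black-box identity testing of an $s$-sparse $n$-variate polynomial of degree $d$ to identity testing of a univariate polynomial of degree $\mathrm{poly}(n,d)$ via a small family of substitutions of the form $x_i \mapsto y^{w_i}$. The hitting set is then obtained as the Cartesian product of a collection of weight vectors with a set of univariate evaluation points.

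First, I would parametrize substitutions by a prime $p$ and a shift $\ell \in \{0,1,\ldots,p-1\}$, mapping $x_i \mapsto y^{(\ell \cdot i \bmod p)}$ (or a similar construction based on residues). For two distinct monomials $x^I \ne x^J$ in a polynomial of individual degree at most $d$, the resulting exponents collide only if a specific linear form in $I-J$ vanishes modulo $p$. Counting over $p$ in a range $[N, 2N]$ for $N = \mathrm{poly}(n,s,d)$, one shows by the prime number theorem that only $O(\log(n \cdot d))$ primes can force collision for a given pair, while there are at most $\binom{s}{2}$ pairs; hence for some prime $p$ and shift $\ell$, all $s$ monomials map to distinct powers of $y$, so the substitution preserves non-vanishing.

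Second, after such a good substitution, the resulting univariate polynomial has degree at most $n \cdot d \cdot p = \mathrm{poly}(n,s,d)$. A univariate polynomial of degree $D$ has at most $D$ roots, so any set of $D+1$ distinct field elements serves as a hitting set. Combining the two stages, the hitting set $\mathcal{H}$ consists of the vectors $(y^{w_1(p,\ell)}, \ldots, y^{w_n(p,\ell)})$ where $p$ ranges over the small family of primes, $\ell$ over the allowed shifts, and $y$ over a set of $\mathrm{poly}(n,s,d)$ points in $\mathbb{F}$. The total cardinality is $\mathrm{poly}(n,s,d)$, as required.

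The main technical obstacle is the counting argument that bounds the number of bad primes for each pair of monomials, together with ensuring that the range $[N, 2N]$ is large enough so that the fraction of bad primes (summed over all $\binom{s}{2}$ pairs) is strictly less than one. This is a well-known combinatorial estimate and is the heart of the Klivans--Spielman hitting-set construction; it relies only on the prime number theorem and elementary divisor bounds. Once this is in place, the hitting-set cardinality follows directly from the product structure of the construction.
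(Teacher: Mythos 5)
The paper does not prove Fact~\ref{fact1} itself; it simply cites Klivans--Spielman~\cite{KS}. Your plan is a sketch of exactly that reference's construction (isolate exponents via weighted substitutions, then hit a low-degree univariate polynomial), so you are reconstructing the cited source rather than taking a different route, and the two-stage skeleton and the $\mathrm{poly}(n,s,d)$ cardinality bound are the right conclusions.

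One concrete problem with the specific map you wrote: under $x_i\mapsto y^{\,\ell i \bmod p}$, the exponent of $y$ carried by a monomial $x^I$ is the ordinary integer $\sum_i I_i(\ell i\bmod p)$, which is not ``a linear form in $I-J$ vanishing modulo $p$''; and if you instead intend the weights to live modulo $p$ (so the exponent becomes $\ell\sum_i i I_i\bmod p$), the shift $\ell$ factors out and the map fails to separate monomials such as $x_1^2$ and $x_2$, since $\sum_i iI_i$ can agree for distinct exponent vectors regardless of $p$ and $\ell$. The standard repair --- and what Klivans--Spielman actually do --- is to pick weights that already distinguish every pair of exponent vectors in $\{0,\dots,d\}^n$: for example the Kronecker substitution $x_i\mapsto y^{(d+1)^{i-1}}$ reduced modulo a prime $p$, where a collision for $I\neq J$ forces $p$ to divide the nonzero integer $\sum_i(d+1)^{i-1}(I_i-J_i)$ of magnitude at most $(d+1)^n$ and hence with only $O(n\log d)$ prime factors; or weights of the form $w_i=i^{\ell}\bmod p$ with $\ell$ ranging over a small set. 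With that substitution in place, the union bound over the $\binom{s}{2}$ pairs and the prime-counting estimate work as you describe, and the final hitting set is the product of the weight-vector family with a set of $\mathrm{poly}(n,s,d)$ univariate evaluation points.
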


The statement can be found in \cite{KS}. In particular, for each
non-zero low degree $n$-variate $s$-sparse polynomial, there exists
a hitting set with cardinality $poly(n,s)$.

\begin{fact}
\label{fact2} Let $|\mathbb{F}|>n$. Given a hitting set
$\mathcal{H}\subseteq\mathbb{F}^n$ for a circuit class
$\mathcal{M}$, there is a $poly(|\mathcal{H}|,n)$ time algorithm
that produce a generator
$\mathcal{G}:\mathbb{F}^q\rightarrow\mathbb{F}^n$ for $\mathcal{M}$.
The individual degrees of each $\mathcal{G}_i$ is bounded by $n-1$
and $q=\lceil\log_n|\mathcal{H}|\rceil$.
\end{fact}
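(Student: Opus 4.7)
The plan is to turn the hitting set $\mathcal{H}$ into a generator by multivariate Lagrange interpolation on an $n$-ary grid. Write $h:=|\mathcal{H}|$, enumerate $\mathcal{H}=\{a^{(1)},\ldots,a^{(h)}\}$, and set $q:=\lceil\log_n h\rceil$ so that $n^q\geq h$. Using the hypothesis $|\mathbb{F}|>n$, fix $n$ distinct field elements $\alpha_1,\ldots,\alpha_n\in\mathbb{F}$, and encode each index $j\in[h]$ as a tuple $(j_1,\ldots,j_q)\in[n]^q$ by writing $j-1$ in base $n$ with leading zeros. This injects $[h]$ into the $q$-fold grid $\{\alpha_1,\ldots,\alpha_n\}^q$.

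For each coordinate $i\in[n]$, prescribe values on the grid by $v_i(\alpha_{j_1},\ldots,\alpha_{j_q}):=a^{(j)}_i$ for $j\in[h]$ and, say, $v_i:=0$ on the at most $n^q-h$ remaining grid points. Let $\mathcal{G}_i(y_1,\ldots,y_q)$ be the unique polynomial of individual degree at most $n-1$ in each $y_\ell$ that matches $v_i$ on the grid; an explicit form is
\begin{displaymath}
\mathcal{G}_i(y_1,\ldots,y_q)=\sum_{(b_1,\ldots,b_q)}v_i(b_1,\ldots,b_q)\prod_{\ell=1}^{q}\prod_{\gamma\neq b_\ell}\frac{y_\ell-\gamma}{b_\ell-\gamma},
\end{displaymath}
and set $\mathcal{G}:=(\mathcal{G}_1,\ldots,\mathcal{G}_n)$. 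Correctness is then immediate: for any nonzero $P\in\mathcal{M}$ there is some $j\in[h]$ with $P(a^{(j)})\neq0$ because $\mathcal{H}$ is a hitting set, and by construction $\mathcal{G}(\alpha_{j_1},\ldots,\alpha_{j_q})=a^{(j)}$, so $P(\mathcal{G}(y_1,\ldots,y_q))$ is nonzero at that particular point of $\mathbb{F}^q$ and is therefore not identically zero. The individual degree bound $\leq n-1$ and the parameter $q=\lceil\log_n h\rceil$ are built directly into the construction.

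For the running time, each $\mathcal{G}_i$ has at most $n^q\leq n\cdot h$ coefficients (since $\lceil\log_n h\rceil\leq\log_n h+1$), and all $n$ coordinate polynomials can be assembled from the Lagrange formula in time $poly(n,h)=poly(|\mathcal{H}|,n)$. There is no deep obstacle here; the only delicate point is the assumption $|\mathbb{F}|>n$, which is needed precisely so that the $n$ distinct evaluation points $\alpha_\ell$ exist, and without it one would have to pass to a small algebraic extension before running the construction. In spirit, the lemma is a pigeonhole-style repackaging of $\mathcal{H}$ as the image, at grid points, of a polynomial map of small individual degree, and the proof is just the formal verification that this repackaging preserves the hitting property.
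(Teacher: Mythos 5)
The paper does not include its own proof of Fact~\ref{fact2}; it simply cites \cite{KMSV}. Your construction is exactly the standard hitting-set-to-generator conversion from that reference: encode the $h=|\mathcal{H}|$ points in base $n$ over a size-$n$ subset of $\mathbb{F}$ (which is where $|\mathbb{F}|>n$ is used), and define each coordinate $\mathcal{G}_i$ by multivariate Lagrange interpolation on the grid $\{\alpha_1,\ldots,\alpha_n\}^q$ so that it has individual degree at most $n-1$ and its image at grid points contains the $i$-th coordinates of all hitting-set elements. The argument is complete and correct: the degree bound, the value $q=\lceil\log_n h\rceil$, and the $poly(n,h)$ running time all follow from $n^q\leq n\cdot h$, and the hitting property transfers because $\mathcal{G}$ hits every point of $\mathcal{H}$ at some grid point of $\mathbb{F}^q$.
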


The proof can be found in \cite{KMSV}. Combining the Fact
\ref{fact1} and the Fact \ref{fact2}, we have the Lemma \ref{lem1}.
Some additional facts are needed in the paper.

\begin{fact}\label{fact3}
Let $P=P_1\cdot P_2\cdots P_n$ be a product of nonzero polynomials
where $P_i\in\mathcal{M}$ for each $i$. Let $\mathcal{G}$ be a
generator for $\mathcal{M}$, then $P(\mathcal{G})\not\equiv0$.
\end{fact}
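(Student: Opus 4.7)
The plan is to reduce this to two elementary observations: (i) the definition of a generator already handles each factor individually, and (ii) the polynomial ring into which $\mathcal{G}$ maps is an integral domain, so products of nonzero polynomials stay nonzero.

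First I would unpack the definition. By hypothesis, $\mathcal{G}=(\mathcal{G}_1,\dots,\mathcal{G}_n):\mathbb{F}^m\to\mathbb{F}^n$ is a generator for $\mathcal{M}$, which means that for every nonzero polynomial $R\in\mathcal{M}$ one has $R(\mathcal{G})=R(\mathcal{G}_1,\dots,\mathcal{G}_n)\not\equiv 0$ as an element of $\mathbb{F}[y_1,\dots,y_m]$. Applying this to each $P_i$ (which is nonzero and belongs to $\mathcal{M}$ by assumption) immediately yields $P_i(\mathcal{G})\not\equiv 0$ for every $i\in[n]$.

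Next I would use multiplicativity of polynomial substitution: since substituting $\mathcal{G}$ into a product is the product of the substitutions, we get
\begin{displaymath}
P(\mathcal{G})=\prod_{i=1}^{n}P_i(\mathcal{G}).
\end{displaymath}
Finally I would invoke the fact that $\mathbb{F}[y_1,\dots,y_m]$ is an integral domain (as $\mathbb{F}$ is a field): a product of nonzero elements in an integral domain is nonzero, so from $P_i(\mathcal{G})\not\equiv 0$ for each $i$ we conclude $P(\mathcal{G})\not\equiv 0$.

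There is essentially no obstacle here; the statement is a direct consequence of the generator definition plus the integral-domain property. The only subtlety worth a sentence is to note explicitly that $\mathcal{M}$ need not be closed under products: one applies the generator property to the individual factors, not to $P$ itself, so closure under multiplication is not required.
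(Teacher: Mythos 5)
Your proof is correct and is the natural argument: apply the generator property to each factor and then use that $\mathbb{F}[y_1,\dots,y_m]$ is an integral domain. The paper states Fact \ref{fact3} without proof, treating it as immediate, and your argument is exactly the one being implicitly invoked; your remark that $\mathcal{M}$ need not be closed under products is a helpful clarification.
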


Another fact is given in \cite{ALON}.

\begin{fact}\label{fact4}
Let $P=P(x_1,\cdots,x_n)$ be a polynomial in $n$ variables over an
arbitrary field $\mathbb{F}$. Suppose that the degree of $P$ in
$x_i$ is bounded by $d_i$ for $1\leq i\leq n$. Let
$S_i\subseteq\mathbb{F}$ be a set with at leat $d_i+1$ elements of
$\mathbb{F}$. If $P=0$ for all $n$-tuples
$(a_1,\cdots,a_n)\in\prod_{i=1}^nS_i$, then $P\equiv0$.
\end{fact}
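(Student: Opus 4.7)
The plan is to prove Fact 4 by induction on the number of variables $n$. For the base case $n=1$, I would invoke the classical fact that a nonzero univariate polynomial of degree at most $d_1$ has at most $d_1$ roots in $\mathbb{F}$; since $P$ vanishes on $S_1$ and $|S_1| \geq d_1+1$, this forces $P \equiv 0$.

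For the inductive step, my strategy is to peel off the last variable. I would expand $P$ as a polynomial in $x_n$ with coefficients in $\mathbb{F}[x_1,\ldots,x_{n-1}]$:
$$P(x_1,\ldots,x_n) = \sum_{j=0}^{d_n} x_n^{j}\, Q_j(x_1,\ldots,x_{n-1}),$$
where each coefficient $Q_j$ has degree in $x_i$ bounded by $d_i$ for every $i < n$, a bound inherited from $P$ because collecting coefficients in $x_n$ cannot increase the degree in any other variable. Now fix an arbitrary $(a_1,\ldots,a_{n-1}) \in \prod_{i=1}^{n-1} S_i$ and consider the univariate polynomial $p(x_n) := P(a_1,\ldots,a_{n-1},x_n)$. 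Its degree is at most $d_n$, and by hypothesis it vanishes on $S_n$, which has at least $d_n+1$ elements; by the base case $p \equiv 0$, so every coefficient $Q_j(a_1,\ldots,a_{n-1})$ must equal $0$.

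Since the tuple $(a_1,\ldots,a_{n-1})$ was arbitrary, each $Q_j$ vanishes identically on the grid $\prod_{i=1}^{n-1} S_i$. Applying the inductive hypothesis to each $Q_j$ with the degree bounds $d_1,\ldots,d_{n-1}$ and sets $S_1,\ldots,S_{n-1}$ then yields $Q_j \equiv 0$ for all $j$, and hence $P \equiv 0$. There is no real obstacle here: the argument is essentially a routine iteration of the univariate root bound. The only point requiring brief verification is that the degree bounds transfer cleanly from $P$ to each coefficient $Q_j$, which is immediate from the definition of partial degree.
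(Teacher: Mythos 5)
Your proof is correct. The paper does not actually prove this fact—it simply cites Alon's \emph{Combinatorial Nullstellensatz} paper—and your induction on the number of variables (reduce to the univariate root bound by fixing the first $n-1$ coordinates, then apply the inductive hypothesis to each coefficient $Q_j$ of $x_n^j$) is precisely the standard argument given in that cited source.
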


\section{Upper Bound of the Sparsity}

\label{sec3}

In this section, we give a sparsity upper bound for the
pseudo-simple minimal low degree unmixed $\Sigma\Pi\Sigma\Pi(k)$
circuits that computes the zero polynomial. The proof is similar to
that in \cite{SV11} with some nontrivial modifications.

\begin{theorem}
\label{mainbound} Let $C$ be a pseudo-simple minimal low degree
unmixed $\Sigma\Pi\Sigma\Pi(k)$ circuit of size $s$, which can be
written as $C(x)=\sum_{i=1}^kF_i(x)$. If $C$ computes a zero
polynomial, then $||F_i||\leq s^{5k^2}$ for each $i\in[k]$.
\end{theorem}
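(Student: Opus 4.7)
The plan is to follow the inductive framework of Saraf--Volkovich \cite{SV11}, tailored to the pseudo-simple unmixed setting. I would induct on $k$. The case $k=1$ is vacuous; for $k=2$, if $F_1+F_2\equiv 0$ and both sides are unmixed products of monic univariates, unique factorization forces $g_{1j}(x_j)=g_{2j}(x_j)$ for every $j$, making $\gcd(F_1,F_2)_{pseudo}=\prod_j g_{1j}(x_j)$, which contradicts pseudo-simplicity unless every $g_{1j}$ is constant (in which case the $F_i$ are constants and the bound is trivial). Hence assume $k\geq 3$ and that the bound holds for all pseudo-simple minimal unmixed identities with fewer than $k$ summands.

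For the inductive step, fix $i=1$ and write $F_i=\beta_i\prod_{j=1}^n f_{ij}(x_j)$. The key computation is that for any univariate $P(x_d)$ and any $c\in\mathbb{\bar{F}}$,
\[
D_{x_d=c}(P,F_i)\;=\;\Bigl(\prod_{j\neq d}f_{ij}(x_j)\Bigr)\cdot\bigl(f_{id}(c)P(x_d)-P(c)f_{id}(x_d)\bigr),
\]
so each $D_{x_d=c}(P,F_i)$ is itself an unmixed product, and its sparsity is at most $2s\cdot\|F_i\|/\|f_{id}\|$. Summing over $i$ and using $C\equiv 0$ yields $\sum_{i=1}^{k}D_{x_d=c}(P,F_i)\equiv 0$. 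By Lemma \ref{lem26} every $i$ with $P|_{in}F_i$ contributes zero; by Lemma \ref{lem27} and pseudo-simplicity, for a suitable choice of $d$ and $P=f_{1d}$ the surviving index set $I\subsetneq[k]$ is a proper subset of $[k]$. After pulling out the new pseudo-gcd and decomposing into minimal pieces, the surviving identity becomes a pseudo-simple minimal unmixed $\Sigma\Pi\Sigma\Pi(k')$ circuit with $k'<k$, to which the inductive hypothesis applies.

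To convert the inductive bound on the reduced identity back into a bound on $\|F_1\|$ itself, I would repeat this construction for a family of disjoint subsets $A_1,\dots,A_{k-1}\subseteq[n]$ in order to bound the projected sparsities $\|F_1\|_{A_j}$, then invoke the Shearer-type inequality of Lemma \ref{shear} to recover $\|F_1\|^{k-2}\leq\prod_{j=1}^{k-1}\|F_1\|_{A_j}$. Each application of $D_{x_d=c}$ and each subsequent pseudo-gcd extraction inflates the circuit size by a factor of at most $O(s)$, so tracking the accumulation across the $k-1$ recursion levels, paired with the quadratic slack between $(k-1)^2$ and $k^2$ in the exponent, closes the induction at $5k^2$.

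The main obstacle will be the bookkeeping required to show that after each application of $D_{x_d=c}$ followed by pseudo-gcd extraction, the reduced identity truly is unmixed, pseudo-simple, and minimal, with properly controlled size, so that the inductive hypothesis may be legitimately applied. This is precisely the ``nontrivial modification'' relative to \cite{SV11}: in the multilinear setting, linear forms $\ell_{ij}$ can depend on many variables and the ordinary gcd suffices, whereas here each $f_{ij}$ is univariate in its own variable and the pseudo-gcd machinery of Lemma \ref{lem27} must be re-established at every recursion level so that the unmixed product form is preserved and the sparsity accounting carries through cleanly.
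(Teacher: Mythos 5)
Your high-level framework (induction on $k$, the operator $D_{x_d=c}$ applied to the univariate factors, a Shearer step at the end) matches the paper's, and your computation
\[
D_{x_d=c}(P,F_i)=\Bigl(\prod_{j\neq d}f_{ij}(x_j)\Bigr)\cdot\bigl(f_{id}(c)P(x_d)-P(c)f_{id}(x_d)\bigr)
\]
is exactly the structural fact that makes the unmixed case work. Your treatment of $k=2$ via unique factorization is also fine. But the inductive step as you describe it has a genuine flaw: you take $P=f_{1d}$, a factor of $F_1$. By Lemma~\ref{lem26} this annihilates the $i=1$ summand, so the reduced identity $\sum_{i\in I}D_{x_d=c}(f_{1d},F_i)\equiv0$ involves only indices $i\neq1$ and carries no information about $F_1$. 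You then say you will ``convert the inductive bound on the reduced identity back into a bound on $\|F_1\|$,'' but there is nothing to convert --- the quantity you want to bound has been removed from the identity. The paper does the opposite: after assuming WLOG that $\|F_k\|>s^{5k^2}$, it chooses $f$ to be a univariate factor of $F_k/\gcd(F_1,F_k)_{pseudo}$, so in particular $f|_{in}F_k$ but $f\nmid_{in}F_1$. Then Claim~\ref{cl312} locates an $f$-minimal subcircuit $C_B$ with $1\in B$ and $k\notin B$, $f|_{in}C_B$ but $f\nmid_{in}F_i$ for all $i\in B$, so applying $D_{x_d=c}(f,\cdot)$ kills $C_B$ as a whole while leaving every individual summand --- including the one coming from $F_1$ --- nonzero. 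This is what lets the inductive hypothesis produce a bound on $\|F_1\|_{var(f)}$.

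Beyond that, two essential quantitative ingredients are missing. First, you never argue that there are \emph{enough} disjoint sets $A_1,\dots,A_{k-1}$ for Shearer to bite; the paper's count comes from $\|F_k\|>s^{5k^2}$ together with $\|\gcd(F_1,F_k)_{pseudo}\|\leq s^{5(k-1)^2}$ (their Claim~\ref{cl311}), forcing $F_k/\gcd(F_1,F_k)_{pseudo}$ to have at least $10k-5$ nonconstant univariate factors since each factor is only $s$-sparse. Second, the bound on $\|\gcd(F_1,\dots,F_t)_{pseudo}\|$ itself requires its own inductive argument (a restriction $x_V=a_V$ outside $var(G)$ chosen via a nonvanishing auxiliary polynomial $\Phi$, collapsing $F_1,\dots,F_t$ to a single $c\cdot G$ and yielding a pseudo-simple minimal circuit on $k-t+1$ gates). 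Without the gcd bound, the Shearer arithmetic cannot close. Finally, you never actually set up the contradiction --- ``assume $\max_i\|F_i\|>s^{5k^2}$, show each of the other gates has sparsity $<s^{5k^2-1}$, then $\|F_k\|\leq\sum_{i<k}\|F_i\|<s^{5k^2}$'' --- which is the spine of the paper's induction step; the vague appeal to ``quadratic slack'' does not substitute for it. In short: the right tools are in your hands, but the factor you differentiate against must come from the large gate (outside the pseudo-gcd with $F_1$), not from $F_1$ itself, and the gcd bound and the counting of disjoint variable blocks are load-bearing steps that need to be supplied.
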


\begin{proof}
We prove the statement by induction on $k$. Let $k=2$ be the basis
case. Since $C$ is pseudo-simple and minimal, We have $C=c-c$ for
some unit $c\in\mathbb{F}$. Thus the sparsity of $F_1$ and $F_2$ is
one. Suppose that the statement is true for $2\leq k\leq K-1$, we
show that the statement is true for $k=K$. The proof is based on
three claims.

\begin{claim}
\label{cl311} Let $G:=\gcd(F_1,\cdots,F_t)_{pseudo}$ be the pseudo
greatest common divisor of $\{F_1,\cdots,F_t\}$, where $2\leq t\leq
k-1$. The sparsity of $G$ satisfies $||G||\leq s^{5(k-t+1)^2}$.
\end{claim}
\begin{proof}
Let $V:=[n]-var(G)$, then we can write $F_i$ for $1\leq i\leq t$ as
$F_i=G\cdot g_i$ where $var(g_i)\cap var(G)=\emptyset$. Further, we
have $F_i=f_{i1}(x_1)\cdot f_{i2}(x_2)\cdots f_{in}(x_n)$ for $1\leq
i\leq k$. Let $M_{f_{1d},F_i}:=\prod_{1\leq j\leq n}(f_{1d}-f_{ij})$
for all $1\leq d\leq n$ and $2\leq i\leq k$. Now define the
polynomial
\begin{displaymath}
\Phi:=\prod_{\emptyset\subsetneq
A\subsetneq[k]}C_A\prod_{1d,i:M_{f_{1d},F_i}\not\equiv0}M_{f_{1d},F_i}
\end{displaymath}
Since $\mathbb{F}$ is sufficient large, there exists an element
$a\in\mathbb{\bar{F}}^n$ such that $\Phi(a)\neq0$. Hence
$\Phi|_{x_V=a_V}\not\equiv0$. Let $F^{\prime}_i=F_i|_{x_V=a_V}$ for
each $i$ and let $C^{\prime}=C|_{x_V=a_V}=\sum_{i=1}^kF^{\prime}_i$.
We show that $C^{\prime}$ is pseudo-simple and minimal. Since
$C_A|_{x_V=a_V}\not\equiv0$ for each nonempty proper subset $A$ of
$[k]$, $C^{\prime}$ is minimal. Since $C$ is pseudo-simple, we have
$\gcd(F_1,\cdots,F_k)_{pseudo}=1$. Hence for every non-constant
factor $f_{1d}$ of $F_1$ there exists $2\leq i\leq k$ such that
$M_{f_{1d},F_i}\not\equiv0$ by the Lemma \ref{lem27}. Let
$f^{\prime}_{1d}=f_{1d}|_{x_V=a_V}$ for $1\leq d\leq n$. Then for
each $f_{1d}$ of $F_1$, there exists $2\leq i\leq k$ such that
$M_{f^{\prime}_{1d},F^{\prime}_i}\not\equiv0$. Thus
$\gcd(F^{\prime}_1,\cdots,F^{\prime}_k)_{pseudo}=1$. Define
$P_1:=\sum_{i=1}^tF^{\prime}_i$, then
\begin{displaymath}
P_1=(\sum_{i=1}^tg_i|_{x_V=a_V})\cdot G=c\cdot G
\end{displaymath}
where $c\in\mathbb{\bar{F}}$. Let $P_i:=F^{\prime}_{t+i-1}$ for
$2\leq i\leq k-t+1$. Now we define a new low degree unmixed
$\Sigma\Pi\Sigma\Pi(k)$ circuit
$C^{\prime\prime}:=\sum_{i=1}^{k-t+1}P_i$. At first,
$C^{\prime\prime}\equiv0$, since $C^{\prime}\equiv0$. Because
$C^{\prime}$ is minimal, $c\cdot G=C^{\prime}_{[t]}\not\equiv0$ and
$C^{\prime\prime}_A\not\equiv0$ for any nonempty subset $1\not\in
A\subseteq[k-t+1]$. So $C^{\prime\prime}$ is minimal. Finally,
$C^{\prime\prime}$ is pseudo-simple, since
\begin{displaymath}
\gcd(C^{\prime\prime})_{pseudo}=\gcd(F^{\prime}_1,\cdots,F^{\prime}_k)_{pseudo}=1
\end{displaymath}
Now we can apply the induction hypothesis, since $k-t+1<k$. As a
result, $||G||=||P_1||\leq s^{5(k-t+1)^2}$.
\end{proof}

Let $f$ be be a univariate polynomial, the arithmetic circuit is
called $f$-minimal if no proper subcircuit has an indecomposable
factor $f$. Moreover, recall that $f|_{in}C$ means that $f$ is an
indecomposable factor of $C$.

\begin{claim}\label{cl312}
Let $C=\sum_{i=1}^kF_i$ be a low degree unmixed
$\Sigma\Pi\Sigma\Pi(k)$ circuit computing the zero polynomial and
let $f$ be a univariate polynomial such that $f|_{in}F_k$ and
$f\nmid_{in} F_1$. There exists a subset $B\subseteq[k]$ with
$2\leq|B|\leq k-1$ satisfying: the subcircuit $C_B$ is $f$-minimal,
$f|_{in}C_B$ and $1\in B$.
\end{claim}
\begin{proof}
Since $C$ computes the zero polynomial, we have
\begin{displaymath}
F_k=-\sum_{i=1}^{k-1}F_i
\end{displaymath}
Further, since $f|_{in}F_k$, $f$ is an indecomposable factor of
$\sum_{i=1}^{k-1}F_i$. If $\sum_{i=1}^{k-1}F_i$ is not minimal, we
can continuously partition $\sum_{i=1}^{k-1}F_i$ into $f$-minimal
circuits such that $f|_{in}C_A$ for each $f$-minimal subcircuit
$C_A$ in the partition and $F_i$ is contained in only one subcircuit
of the partition for each $i\in[k-1]$. Suppose that $C_B$ is the
$f$-minimal subcircuit such that $1\in B$ and $f|_{in}C_B$. Since
$k\not\in B$ and $f\nmid_{in} F_1$, we have $2\leq|B|\leq k-1$.
\end{proof}

\begin{claim}\label{cl313}
Let $C=\sum_{i=1}^tF_i$ be a pseudo-simple $f$-minimal low-degree
unmixed $\Sigma\Pi\Sigma\Pi(k)$ circuit of size $s$ where $2\leq
t\leq k-1$. Let $f$ be a non-constant univariate low-degree
polynomial. Suppose that $f|_{in}C$, then $||F_1||_{var(f)}\leq
s^{5t^2}$.
\end{claim}
\begin{proof}
Suppose that $var(f)=\{d\}$ where $1\leq d\leq n$ and there exists
an element $a\in\mathbb{\bar{F}}$ such that $f(a)\neq0$. Let
$C^{\prime}:=D_{x_d=a}(f,C)=\sum_{i=1}^tD_{x_d=a}(f,F_i)$. Since
$F_i$ is an unmixed polynomial, we have $F_i=P_i\cdot f_i$ where
$d\not\in var(P_i)$. If $f_{id}(x_d)\neq1$ for $F_i$, then set
$f_i:=f_{id}(x_d)$. Otherwise let $f_i:=1$. Then by the Lemma
\ref{lem26}, we have
\begin{displaymath}
C^{\prime}=\sum_{i=1}^tP_i\cdot D_{x_d=a}(f,f_i)\equiv0
\end{displaymath}
Since $C^{\prime}$ is $f$-minimal,
$C^{\prime}_A=D_{x_d=a}(f,C^{\prime}_A)\not\equiv0$ for each
nonempty subset $A\subseteq[t]$. Thus $C^{\prime}$ is minimal.
Further, since $D_{x_d=a}(f,f_i)$ is a constant, $C^{\prime}$ is a
pseudo-simple low-degree unmixed $\Sigma\Pi\Sigma\Pi(k)$ circuit.
Because $t\leq k-1$, we can apply the induction hypothesis. So
$||F_1||_{var(f)}=||P_1||\leq s^{5t^2}$.
\end{proof}

Now we can prove the induction step by contradiction. Without loss
of generality, assume that $||F_k||>s^{5k^2}$. We show that
$||F_i||\leq s^{5k^2-1}$ for $1\leq i\leq k-1$. Since the circuit is
symmetric, it is sufficient to prove that $||F_1||\leq s^{5k^2-1}$.
Let $P_k:=F_k/\gcd(F_1,F_k)_{pseudo}$. Without loss of generality,
assume that $P_k=f_{k1}(x_1)\cdot f_{k2}(x_2)\cdots f_{km}(x_m)$.
Then we have
\begin{displaymath}
||P_k||=||f_{k1}(x_1)\cdot f_{k2}(x_2)\cdots f_{km}(x_m)||\leq s^m
\end{displaymath}
Further, we have
\begin{displaymath}
||P_k||\geq\frac{||F_k||}{||\gcd(F_1,F_k)_{pseudo}||}>s^{5k^2-5(k-1)^2}
\end{displaymath}
by the Claim \ref{cl311}. Thus $m\geq 10k-5$. Given any
$f_{kj}(x_j)$ of $P_k$ for $1\leq j\leq m$, there exists a subset
$A\subseteq[k]$ such that $1\in A$, $C_A$ is $f_{kj}$-minimal and
$f_{kj}|_{in}C_A$ by the Claim \ref{cl312}. Assume that
$A=\{1,2,\cdots,t\}$ where $2\leq t\leq k-1$. Let
$G=\gcd(F_1,\cdots,F_t)_{pseudo}$, then $||G||\leq s^{5(k-t+1)^2}$
by the Claim \ref{cl311}. Since $C_A$ is $f_{kj}$-minimal,
$f_{kj}\nmid_{in}F_i$ for every $i\in[t]$. Thus $f_{kj}\nmid_{in}G$.
Then $C^{\prime}_A:=\sum_{i=1}^tF_i/G$ is a pseudo-simple
$f_{kj}$-minimal low-degree unmixed $\Sigma\Pi\Sigma\Pi(t)$ circuit.
Since $f_{kj}|_{in}C^{\prime}_A$, we have
$||F_1/G||_{var(f_{kj})}\leq s^{5t^2}$ by the Claim \ref{cl313}. As
a result, we have
\begin{displaymath}
||F_1||_{var(f_{kj})}\leq||F_1/G||_{var(f_{kj})}\cdot||G||\leq
s^{5(k-1)^2+k+19}
\end{displaymath}
The above inequality is valid for each $1\leq j\leq 10k-5$.
Moreover, all $f_{kj}$-s share no variable. Then applying the Lemma
\ref{shear}, we have
\begin{displaymath}
||F_1||\leq\Big(\prod_{j=1}^{10k-5}||F_1||_{var(f_{kj})}\Big)^{\frac{1}{10k-6}}<s^{5k^2-1}
\end{displaymath}
Since the circuit is symmetric, $||F_i||<s^{5k^2-1}$ holds for each
$i\in[k-1]$. Hence,
\begin{displaymath}
||F_k||=||\sum_{i=1}^{k-1}F_i||\leq\sum_{i=1}^{k-1}||F_i||<s^{5k^2}
\end{displaymath}
This leads to a contradiction. So $||F_i||\leq s^{5k^2}$ for each
$i\in[k]$.
\end{proof}

\section{The Black-Box Algorithm}

\label{sec4} Similar to \cite{SV11}, we construct a generator for
low degree unmixed $\Sigma\Pi\Sigma\Pi(k)$ circuits. The image of
the generator is the hitting set for such circuits. Then a
polynomial time black-box algorithm can be obtained from the
generator. Fix a set $C=\{c_0,c_1,\cdots,c_n\}\subseteq\mathbb{F}$
with $n+1$ distinct elements. Recall that $\mathcal{L}_m$ is a
generator for $m$-sparse low-degree polynomials. Let
$\overrightarrow{y_i}$ denote the vector with $q$ entries for each
$i$.
\begin{definition}
For each $i\in[n]$ let $W_i(z):\mathbb{F}\rightarrow\mathbb{F}$ be
the degree $n$ polynomial such that $W_i(c_j)=1$ for $j\geq i$ and
$W_i(c_j)=0$ otherwise. For each $l\geq1$, $m\geq1$ and $i\in[n]$,
define
\begin{displaymath}
\mathcal{S}^i_{l,m}(\overrightarrow{y_1},\cdots,\overrightarrow{y_l},z_1,\cdots,z_l):=\mathcal{S}^i_{l-1,m}(\overrightarrow{y_1}
,\cdots,\overrightarrow{y_{l-1}},z_1,\cdots,z_{l-1})\cdot
W_i(z_l)+\mathcal{L}^i_m(\overrightarrow{y_l})(1-W_i(z_l))
\end{displaymath}
where
$\mathcal{S}^i_{l,m}(\overrightarrow{y_1},\cdots,\overrightarrow{y_l},z_1,\cdots,z_l)$
is a function from $\mathbb{F}^{q\cdot l+l}$ to $\mathbb{F}$.
Moreover, let $\mathcal{S}^i_{0,m}\equiv0$ for any $i$. Then define
$\mathcal{S}_{l,m}:\mathbb{F}^{q\cdot l+l}\rightarrow\mathbb{F}^n$
as
\begin{displaymath}
\mathcal{S}_{l,m}:=(\mathcal{S}^1_{l,m},\mathcal{S}^2_{l,m},\cdots,\mathcal{S}^n_{l,m})
\end{displaymath}
\end{definition}

Now we have the following fact from the definition.

\begin{fact}\label{fact5}
For each $0\leq d\leq n$, we have
\begin{displaymath}
\mathcal{S}_{l,m}\mid_{z_l=c_d}=(\mathcal{S}^1_{l-1,m},\cdots,\mathcal{S}^d_{l-1,m},\mathcal{L}^{d+1}_m,\cdots,\mathcal{L}^n_m)
\end{displaymath}
Then for every $\vec{a}\in Im(\mathcal{S}_{l-1,m})$ and $\vec{b}\in
Im(\mathcal{L}_m)$, we have $(a_1,\cdots,a_d,b_{d+1},\cdots,b_n)\in
Im(\mathcal{S}_{l,m})$. In particular, $Im(\mathcal{S}_{l-1,m})\cup
Im(\mathcal{L}_m)\subseteq Im(\mathcal{S}_{l,m})$.
\end{fact}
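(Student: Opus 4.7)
The plan is to verify Fact~\ref{fact5} by direct unfolding of the recursive definition of $\mathcal{S}^i_{l,m}$ and then a case split on the index $i$. Fix $0 \le d \le n$ and consider an arbitrary component $\mathcal{S}^i_{l,m}\mid_{z_l = c_d}$. From the definition,
\begin{displaymath}
\mathcal{S}^i_{l,m}\mid_{z_l=c_d} = \mathcal{S}^i_{l-1,m}\cdot W_i(c_d) + \mathcal{L}^i_m(\overrightarrow{y_l})\bigl(1-W_i(c_d)\bigr).
\end{displaymath}
By the defining property of $W_i$, we have $W_i(c_d)=1$ when $d\ge i$ (that is, $i\le d$) and $W_i(c_d)=0$ when $d<i$ (that is, $i>d$). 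Substituting these two values yields, respectively, $\mathcal{S}^i_{l,m}\mid_{z_l=c_d}=\mathcal{S}^i_{l-1,m}$ for $i\le d$ and $\mathcal{S}^i_{l,m}\mid_{z_l=c_d}=\mathcal{L}^i_m$ for $i>d$. Collecting the $n$ components in order produces exactly the tuple in the first displayed equation.

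To derive the second assertion, I would pick preimages. Given $\vec{a}\in Im(\mathcal{S}_{l-1,m})$, choose inputs $(\overrightarrow{y_1}^{*},\ldots,\overrightarrow{y_{l-1}}^{*},z_1^{*},\ldots,z_{l-1}^{*})$ with $\mathcal{S}_{l-1,m}$ of these equal to $\vec{a}$. Given $\vec{b}\in Im(\mathcal{L}_m)$, choose $\overrightarrow{y_l}^{*}$ with $\mathcal{L}_m(\overrightarrow{y_l}^{*})=\vec{b}$. Feeding this joint assignment into $\mathcal{S}_{l,m}$ together with $z_l=c_d$ and applying the componentwise identity just established yields the vector $(a_1,\ldots,a_d,b_{d+1},\ldots,b_n)$, which therefore lies in $Im(\mathcal{S}_{l,m})$.

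For the ``in particular'' clause, I would specialise the parameter $d$ at its two extreme values. Taking $d=n$, the inequality $i\le d$ holds for every $i\in[n]$, so the identity collapses to $\mathcal{S}_{l,m}\mid_{z_l=c_n}=\mathcal{S}_{l-1,m}$, proving $Im(\mathcal{S}_{l-1,m})\subseteq Im(\mathcal{S}_{l,m})$. Taking $d=0$, the inequality $i>d$ holds for every $i\in[n]$, so $\mathcal{S}_{l,m}\mid_{z_l=c_0}=\mathcal{L}_m$, proving $Im(\mathcal{L}_m)\subseteq Im(\mathcal{S}_{l,m})$.

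There is no conceptual obstacle in this proof; the entire content is the combinatorial behaviour of the indicator polynomial $W_i$ on the set $C=\{c_0,\ldots,c_n\}$. The only point that requires a little care is the index bookkeeping: one must match the split ``$i\le d$ versus $i>d$'' to the way the tuple on the right-hand side is written, namely the first $d$ coordinates coming from $\mathcal{S}_{l-1,m}$ and the remaining $n-d$ coordinates from $\mathcal{L}_m$. Once this alignment is verified, the rest of the argument is a one-line substitution.
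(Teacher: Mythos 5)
Your proof is correct and is exactly the direct verification that the paper has in mind; the paper states Fact 5 without proof, treating it as immediate from the definition, and your case split on $i\le d$ versus $i>d$ using $W_i(c_d)\in\{0,1\}$ is the natural unfolding of that claim. The preimage argument and the two extreme specialisations $d=n$ and $d=0$ for the ``in particular'' clause are also correct.
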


Next, we show that there is a generator for the low degree unmixed
$\Sigma\Pi\Sigma\Pi(k)$ circuits

\begin{lemma}\label{gen}
Suppose that $P\in\mathbb{F}[x_1,\cdots,x_n]$ is a nonzero
polynomial computed by a low degree unmixed $\Sigma\Pi\Sigma\Pi(k)$
circuit $C=\sum_{i=1}^kF_i=\sum_{i=1}^k\prod_{j=1}^{d_i}f_{ij}(x_j)$
with size $s\geq2$ and $k\geq1$. Then it holds that
$P(\mathcal{S}_{k,m})\not\equiv0$ for every $m\geq s^{5k^2+2}$.
\end{lemma}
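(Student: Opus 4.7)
The plan is to prove the lemma by induction on $k$. For the base case $k=1$, the polynomial $P=F_1=\prod_{j=1}^{d_1}f_{1j}(x_j)$ is a product of nonzero $s$-sparse univariate polynomials in disjoint variables. Since $m\geq s^{7}\geq s$, Lemma \ref{lem1} says $\mathcal{L}_m$ is a generator for $s$-sparse low-degree polynomials, and Fact \ref{fact3} then yields $P(\mathcal{L}_m)\not\equiv 0$. Fact \ref{fact5} with $l=1$ and $d=0$ identifies $\mathcal{L}_m$ with $\mathcal{S}_{1,m}|_{z_1=c_0}$, so $P(\mathcal{S}_{1,m})\not\equiv 0$.

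For the inductive step, assume the statement for every $1\leq k'<k$ and take $C=\sum_{i=1}^k F_i$ nonzero of size $s\geq 2$. I would first perform two standard reductions. If $C$ is not minimal, pick $\emptyset\subsetneq A\subsetneq[k]$ with $C_A\equiv 0$; then $P=C_{[k]\setminus A}$ is a nonzero low-degree unmixed $\Sigma\Pi\Sigma\Pi(k')$ circuit of size at most $s$ with $k'<k$, and the inductive hypothesis combined with $Im(\mathcal{S}_{k',m})\subseteq Im(\mathcal{S}_{k,m})$ (obtained by iterating Fact \ref{fact5}) gives $P(\mathcal{S}_{k,m})\not\equiv 0$. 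If $C$ is not pseudo-simple, write $C=G\cdot sim(C)$ where $G=\gcd(C)_{pseudo}$ is a nonconstant product of univariate $s$-sparse polynomials; Fact \ref{fact3} and Fact \ref{fact5} give $G(\mathcal{S}_{k,m})\not\equiv 0$, so it suffices to establish the claim for $sim(C)$, which is pseudo-simple of size at most $s$.

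This leaves the main case: $C$ is pseudo-simple, minimal, and nonzero. Suppose for contradiction that $C(\mathcal{S}_{k,m})\equiv 0$. Substituting $z_k=c_0$ and applying Fact \ref{fact5} yields $C(\mathcal{L}_m)\equiv 0$; more generally, each hybrid $H_d:=\mathcal{S}_{k,m}|_{z_k=c_d}$, $0\leq d\leq n$, annihilates $C$. The plan is to leverage Theorem \ref{mainbound}: if we could show $\|F_i\|\leq s^{5k^2}$ for every $i$, then $\|C\|\leq k\cdot s^{5k^2}\leq s^{5k^2+1}\leq m$ (using $k\leq s$), so $C$ would be $m$-sparse and the relation $C(\mathcal{L}_m)\equiv 0$ would force $C\equiv 0$ via Lemma \ref{lem1}, contradicting nontriviality. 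To derive $\|F_i\|\leq s^{5k^2}$, I would mimic the proof of Theorem \ref{mainbound} in the presence of the hybrid vanishings: for each univariate factor $f_{kj}(x_j)$ of $F_k$, a Claim \ref{cl312}-type argument isolates an $f_{kj}$-minimal subcircuit of $C$; Claim \ref{cl311} bounds the corresponding pseudo-gcd; then a Claim \ref{cl313}-style application of the operator $D_{x_j=\alpha}$ together with Lemma \ref{lem26} extracts a pseudo-simple minimal zero $\Sigma\Pi\Sigma\Pi(k-1)$ subcircuit to which the inductive hypothesis (and Theorem \ref{mainbound}) apply, yielding a coordinate bound $\|F_i\|_{var(f_{kj})}\leq s^{5(k-1)^2}$; finally Lemma \ref{shear} aggregates these bounds into the total bound $\|F_i\|\leq s^{5k^2}$.

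The main obstacle is this last step. The proof of Theorem \ref{mainbound} exploits an honest identity $C\equiv 0$ in each of its surgery steps, whereas here we only have the weaker condition $C(\mathcal{S}_{k,m})\equiv 0$, i.e., vanishing on the image of the generator. Propagating this weaker vanishing through the hybrid walk $d=0,\ldots,n$, the pseudo-gcd factorizations, and the $D_{x_j=\alpha}$ derivatives while preserving pseudo-simplicity, minimality, and the size bound $s$ at each intermediate step is precisely the delicate bookkeeping that parallels the interlocking Claims \ref{cl311}--\ref{cl313}, and is where the bulk of the technical effort will lie.
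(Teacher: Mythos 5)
Your base case and the two reductions (to a smaller top fan-in when $C$ is not minimal, and to $\mathrm{sim}(C)$ when $C$ is not pseudo-simple) coincide with the paper's. The gap is in the main inductive case, and you essentially flag it yourself: the plan of re-deriving the sparsity bound $\|F_i\|\leq s^{5k^2}$ from the hypothesis $C(\mathcal{S}_{k,m})\equiv0$ does not go through. Theorem~\ref{mainbound} and its interlocking claims require the honest identity $C\equiv0$; in particular Claim~\ref{cl312} uses $F_k=-\sum_{i=1}^{k-1}F_i$, which simply fails when $C\not\equiv0$. Worse, vanishing on $Im(\mathcal{S}_{k,m})$ gives no a priori sparsity information about the $F_i$: a nonzero pseudo-simple minimal unmixed circuit of size $s$ can have $\|F_i\|$ as large as $s^n$, and Theorem~\ref{mainbound}'s contrapositive only says that large sparsity makes $C\not\equiv0$, which is exactly what you already assumed. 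So your contradiction route has no clear path to its target inequality, and the ``delicate bookkeeping'' you defer is where the argument actually breaks.

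The paper sidesteps this entirely: rather than bounding the sparsity of $C$ itself, it restricts $P$ along a point of $Im(\mathcal{S}_{k-1,m})$ to make a subinstance that is \emph{both} provably sparse and provably nonzero. Concretely, using the induction hypothesis $\mathcal{S}_{k-1,m}$ hits every proper subcircuit $C_A$ and every $s$-sparse factor, so there is $c\in Im(\mathcal{S}_{k-1,m})$ with $\Phi(c)\neq0$ where $\Phi$ is a product of these; then (Claim~\ref{cl411}) a prefix restriction $x_{[t]}=c_{[t]}$ can be chosen so that $s^{5k^2}<\max_i\|F_i|_{x_{[t]}=c_{[t]}}\|\leq s^{5k^2+1}$. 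This restricted circuit stays pseudo-simple and minimal because $\Phi(c)\neq0$, so Theorem~\ref{mainbound} (contrapositive) forces $P|_{x_{[t]}=c_{[t]}}\not\equiv0$ while its sparsity is at most $s^{5k^2+2}$; $\mathcal{L}_m$ hits it, and Fact~\ref{fact5} places the assembled point $(c_1,\ldots,c_t,b_{t+1},\ldots,b_n)$ in $Im(\mathcal{S}_{k,m})$. In short, the paper's use of Theorem~\ref{mainbound} is as a \emph{nonvanishing certificate} for a restriction, not as a tool for bounding the sparsity of the ambient circuit; you would need this restriction-and-hybrid mechanism (Claims~\ref{cl411}--\ref{cl412}) to complete the proof.
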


\begin{proof}
At first, we need some claims that are needed in the proof.

\begin{claim}
\label{cl411} Let $M\geq1$ and let $C=\sum_{i=1}^kF_i$ be a low
degree unmixed $\Sigma\Pi\Sigma\Pi(k)$ circuit of size $s$ such that
$\max_i||F_i||>M$. Let $a\in\mathbb{F}^n$ with $F_i(a)\neq0$ for
each $i\in[k]$. Then there is a $0\leq t\leq n-1$ such that
$M<\max_i||F_i\mid_{x_{[t]}=a_{[t]}}||\leq M\cdot s$.
\end{claim}

\begin{proof}
Since $\max_i||F_i||>M$ and
$\max_i||F_i\mid_{x_{[n]}=a_{[n]}}||\leq1$, let $t$ be the largest
index such that $M<\max_i||F_i\mid_{x_{[t]}=a_{[t]}}||$. Thus we
have $\max_i||F_i\mid_{x_{[t+1]}=a_{[t+1]}}||\leq M$. Since $F_i$ is
an unmixed polynomial and $F_i(a)\neq0$, setting $x_{t+1}$ to
$a_{t+1}$ can affect at most one factor $f_{ij}(x_j)$ (where
$j=t+1$) for each $F_i$. As a result, it can reduce the sparsity by
a factor at most $||f_{ij}(x_j)||\leq s$. Hence
$||F_i\mid_{x_{[t]}=a_{[t]}}||\leq M\cdot s$.
\end{proof}

\begin{claim}\label{cl412}
Let $P\in\mathbb{F}[x_1,\cdots,x_n]$ be a nonzero polynomial
computed by a pseudo-simple minimal low-degree unmixed
$\Sigma\Pi\Sigma\Pi(k)$ circuit $C=\sum_{i=1}^kF_i$ with size $s$
and $k\geq2$. In addition, let $\mathcal{G}_{k-1}$ be a generator
for $\Sigma\Pi\Sigma\Pi(k-1)$ circuits of size $s$ and $s$-sparse
polynomials. Then there is a $c\in Im(\mathcal{G}_{k-1})$ and $0\leq
t\leq n-1$ such that $P^{\prime}:=P\mid_{x_{[t]}=c_{[t]}}$ is a
nonzero $s^{5k^2+2}$-sparse polynomials.
\end{claim}

\begin{proof}
If $\max_i||F_i||\leq s^{5k^2}$, then $||P||\leq k\cdot s^{5k^2}\leq
s^{5k^2+2}$. Otherwise, we have $\max_i||F_i||> s^{5k^2}$. Recall
that $F_i=f_{i1}(x_1)\cdot f_{i2}(x_2)\cdots f_{in}(x_n)$ for $1\leq
i\leq k$. Let $M_{f_{1d},F_i}:=\prod_{1\leq j\leq n}(f_{1d}-f_{ij})$
for all $1\leq d\leq n$ and $2\leq i\leq k$. Now define the
polynomial
\begin{displaymath}
\Phi:=\prod_{\emptyset\subsetneq
A\subsetneq[k]}C_A\prod_{1d,i:M_{f_{1d},F_i}\not\equiv0}M_{f_{1d},F_i}
\end{displaymath}
Since each multiplicand of $\Phi$ is either a $s$-sparse polynomial
or a $\Sigma\Pi\Sigma\Pi(k-1)$ circuit,
$\Phi(\mathcal{G}_{k-1})\not\equiv0$. Thus there is a $c\in
Im(\mathcal{G}_{k-1})$ such that $\Phi(c)\neq0$. Since $F_i$ appears
in the multiplicands of $\Phi$, $F_i(c)\neq0$ for each $i\in[k]$.
Then we have
\begin{displaymath}
s^{5k^2}<\max_i||F_i\mid_{x_{[t]}=c_{[t]}}||\leq s^{5k^2+1}
\end{displaymath}
by the Claim \ref{cl411}. Let
$C^{\prime}:=C\mid_{x[t]=c[t]}=\sum_{i=1}^kF_i\mid_{x_{[t]}=c_{[t]}}$.
Similar to the argument in the proof of the Theorem \ref{mainbound},
$C^{\prime}$ is pseudo-simple and minimal. Further, we have
$C^{\prime}\not\equiv0$ by the Theorem \ref{mainbound} and
$\max_i||F_i||> s^{5k^2}$. Finally,
$||C^{\prime}||=||P\mid_{x_{[t]}=c_{[t]}}||\leq s^{5k^2}\cdot s\cdot
k\leq s^{5k^2+2}$.
\end{proof}

Now we prove the statement by induction on $k$. If $k=1$, then $P$
is a product of $s$-sparse polynomials. Thus by the Fact \ref{fact3}
and the the Fact \ref{fact5}, $P(\mathcal{S}_{1,m})\not\equiv0$.
Assume that the statement is true for $2\leq k\leq K$. We can assume
that $C$ is pseudo-simple and minimal. If $C$ is not minimal, there
is a $\Sigma\Pi\Sigma\Pi(k-1)$ circuit $C^{\prime}$ computing $P$.
Then the induction hypothesis can be applied. If $C$ is not
pseudo-simple, we have $P=G\cdot C^{\prime}$ where
$G=\gcd(C)_{pseudo}$ and $C^{\prime}=Sim(C)$. Since $G$ is a product
of $s$-sparse polynomials, $G(\mathcal{S}_{k,m})\not\equiv0$ by the
reason that is identical to the base case. So without loss of
generality, we can assume that $C$ is pseudo-simple and minimal. By
the induction hypothesis, $\mathcal{S}_{k-1,m}$ is a generator for
$\Sigma\Pi\Sigma\Pi(k-1)$ circuits and it is also a generator for
$s$-sparse polynomials. Then from the Claim \ref{cl412}, there is a
$c\in Im(\mathcal{S}_{k-1,m})$ and $0\leq t\leq n-1$ such that
$P^{\prime}=P\mid_{x_{[t]}=c_{[t]}}$ is a nonzero
$s^{5k^2+2}$-sparse polynomial. Since $\mathcal{L}_m$ is a generator
for $s^{5k^2+2}$-sparse polynomials, there exists a $b\in
Im(\mathcal{L}_m)\subseteq Im(\mathcal{S}_{k,m})$ such that
$P^{\prime}(b)\neq0$. As a result,
$P(a_1,\cdots,a_t,b_{t+1},\cdots,b_n)\neq0$.
\end{proof}

Then a black-box algorithm can be obtained by constructing a hitting
set for the low degree unmixed $\Sigma\Pi\Sigma\Pi(k)$ circuit.

\begin{algorithm}\label{alg1}
\SetKwData{Left}{left}\SetKwData{This}{this}\SetKwData{Up}{up}
\SetKwFunction{Union}{Union}\SetKwFunction{FindCompress}{FindCompress}
\SetKwInOut{Input}{input}\SetKwInOut{Output}{output}
\Input{A low degree unmixed $\Sigma\Pi\Sigma\Pi(k)$ circuit $C$}
\Output{A hitting set $\mathcal{H}$}\Begin{ Construct a set
$U\subseteq\mathbb{F}$ with size $n^3+1$\;
$\mathcal{H}:=\mathcal{S}_{k,s^{5k^2+2}}((U^q)^k\times U^k)$\;
Return $\mathcal{H}$\;}\caption{Construct a Hitting Set}
\end{algorithm}

\begin{theorem}\label{algo41}
The Algorithm \ref{alg1} is a polynomial time algorithm, which
outputs a hitting set $\mathcal{H}$ of size $n^{O(k)}\cdot
s^{O(k^3)}$ for the low degree unmixed $\Sigma\Pi\Sigma\Pi(k)$
circuit with size $s$.
\end{theorem}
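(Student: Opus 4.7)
The plan is to verify Algorithm \ref{alg1} meets the claimed guarantees in three stages: proving $\mathcal{H}$ is a hitting set, bounding $|\mathcal{H}|$, and bounding the runtime.

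First, I would prove the hitting-set property by combining Lemma \ref{gen} with Fact \ref{fact4}. For any nonzero polynomial $P$ computed by a low-degree unmixed $\Sigma\Pi\Sigma\Pi(k)$ circuit of size $s$, Lemma \ref{gen} (applied with $m = s^{5k^2+2}$) gives $P(\mathcal{S}_{k,m}) \not\equiv 0$ as a polynomial in the $qk + k$ new variables. The key sub-claim is that every individual degree of $P \circ \mathcal{S}_{k,m}$ (in any $y$-entry or any $z_l$) is bounded by some fixed $poly(n)$ strictly less than $n^3+1 = |U|$: by Lemma \ref{lem1}, each $\mathcal{L}^i_m$ has individual $y$-degree at most $n-1$, so each $\mathcal{S}^i_{k,m}$ has individual $y$-degree at most $n-1$ and individual $z_l$-degree at most $n$ (from the defining factor $W_i(z_l)$ of degree $n$); since $P$ has total degree $poly(n)$ (which we may take as $n$), composition multiplies these individual degrees by at most $n$. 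Hence applying Fact \ref{fact4} over the grid $(U^q)^k \times U^k$ with $|U| = n^3+1$ produces some point in the grid at which $P(\mathcal{S}_{k,m})$ is nonzero, and its image lies in $\mathcal{H}$.

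Second, I would bound $|\mathcal{H}|$. Since $\mathcal{H} = \mathcal{S}_{k,s^{5k^2+2}}((U^q)^k \times U^k)$, we have $|\mathcal{H}| \leq |U|^{qk+k} = (n^3+1)^{qk+k}$. Lemma \ref{lem1} applied at sparsity $m = s^{5k^2+2}$ gives $q = O(\log_n m) = O(k^2 \log_n s)$, so $qk + k = O(k^3 \log_n s + k)$. Splitting the exponent, $n^{3qk} = n^{O(k^3 \log_n s)} = s^{O(k^3)}$ and $n^{3k} = n^{O(k)}$, which yields $|\mathcal{H}| = n^{O(k)} \cdot s^{O(k^3)}$.

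Third, I would argue the runtime. Constructing $U$ is immediate. Producing the generator $\mathcal{L}_m$ takes time polynomial in $n$ and $m = s^{5k^2+2}$ by Fact \ref{fact2} combined with Fact \ref{fact1}, hence polynomial in $n$ and $s$ (for fixed $k$). Evaluating $\mathcal{S}_{k,m}$ at each grid point reduces via the recursive definition to $O(k)$ arithmetic operations involving $W_i$ and $\mathcal{L}^i_m$ evaluations, each in $poly(n,s)$ time. Since $|\mathcal{H}|$ itself is $n^{O(k)} \cdot s^{O(k^3)}$, the total runtime is polynomial in this output size.

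The main obstacle will be the individual-degree accounting in step one: I must confirm that total-degree control on $P$ translates into individual-degree bounds on $P \circ \mathcal{S}_{k,m}$ in \emph{every} single variable of the composed polynomial, and that these bounds are strictly below $|U| = n^3+1$. The cleanest route is to observe that the degree of a composition in a fixed new variable is at most the total degree of $P$ times the maximum individual degree of the substitutions in that variable, both of which are $poly(n)$, comfortably under $n^3+1$ once the standard reduction of "low degree" to degree $O(n)$ is invoked.
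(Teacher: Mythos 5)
Your proof takes exactly the same route as the paper: apply Lemma \ref{gen} with $m = s^{5k^2+2}$ to get $P(\mathcal{S}_{k,m})\not\equiv0$, bound the individual degrees so Fact \ref{fact4} applies over the grid $(U^q)^k\times U^k$, count $|\mathcal{H}|\leq|U|^{qk+k}$ with $q=O(k^2\log_n s)$, and note the generator is constructible in polynomial time. The one place you go further than the paper is in actually justifying the individual-degree bound (each $\mathcal{S}^i_{k,m}$ has individual degree $\leq n$ and $\deg P = O(n)$, so the composition has individual degree $O(n^2)<n^3+1$); the paper asserts this without argument, so your elaboration is a genuine improvement in rigor but not a different approach.
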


\begin{proof}
Suppose that $P\in\mathbb{F}[x_1,\cdots,x_n]$ is a nonzero low
degree polynomial computed by the low degree unmixed
$\Sigma\Pi\Sigma\Pi(k)$ circuit of size $s$. From the Lemma
\ref{gen}, $P(\mathcal{S}_{k,s^{5k^2+2}})\not\equiv0$ that depends
on $(q+1)\cdot k=k+k(5k^2+2)\log_ns$ variables. Since the individual
degree is less than $n^3+1$ for $P(\mathcal{S}_{k,s^{5k^2+2}})$,
there exists a $c\in\mathcal{H}$ such that $P(c)\neq0$ by the fact
\ref{fact4}. So $\mathcal{H}$ is a hitting set of $P$. The size of
$\mathcal{H}$ is
\begin{displaymath}
|\mathcal{H}|\leq n^{O(k)}\cdot n^{O(k^3\log_ns)}=n^{O(k)}\cdot
s^{O(k^3)}
\end{displaymath}
Since $k$ is a constant and the generator for sparse polynomials can
be constructed in polynomial time by the Fact \ref{fact2}, the
generator $\mathcal{S}_{k,s^{5k^2+2}}$ can be constructed in
polynomial time. Then it is obvious that the Algorithm \ref{alg1} is
a polynomial time algorithm.
\end{proof}

\section{Conclusions and an Open Problem}
We give a polynomial time black-box algorithm of identity testing
for the low degree unmixed $\Sigma\Pi\Sigma\Pi(k)$ circuits. An open
problem related to our work is to design a polynomial time algorithm
of identity testing for the general low degree
$\Sigma\Pi\Sigma\Pi(k)$ circuits.






\begin{thebibliography}{10}
\bibitem{AKS} {\sc M. Agrawal, N. Kayal and N. Saxena}, {\em Primes is in $P$}, The Annals of Mathematics, 160(2):781-793, 2004.
\bibitem{AV}{\sc M. Agrawal and V. Vinay}, {\em Arithmetic circuits: A chasm at depth four}, In Proceedings of the 49th
Annual IEEE Symposium on Foundations of Computer Science (FOCS),
67-75, 2008.
\bibitem{ALON}{\sc N. Alon}, {\em Combinatorial Nullstellensatz},
Combinatorics, Probability and Computing, 8:7-29, 1999.
\bibitem{AS} {\sc S. Arora and S. Safra}, {\em Probabilistic checking of proofs: A new characterization of
$NP$}, JACM, 45(1):70-122, 1998.
\bibitem{CK}{\sc Z.Z. Chen and M.Y. Kao}, {\em Reducing Randomness via irrational
numbers}, In Proceedings of 29th Annual ACM Symposium on Theorey of
Computing (STOC), pp. 200-209, 1997.
\bibitem{KI}{\sc V. Kabanets and R. Impagliazzo}, {\em Derandomizing Polynomial Identity Tests Means Proving Circuit Lower Bounds},
Computational Complexity, 13(1-2):1-46, 2004.
\bibitem{KMSV}{\sc Z.S. Karnin, P. Mukhopadhyay, A. Shpilka, and I. Volkovich}, {\em  Deterministic identity testing of depth-4 multilinear circuits with bounded top fan-in},  In Proceedings of
the 42nd Annual ACM Symposium on Theory of Computing (STOC),
649-658, 2010.
\bibitem{KS08}{\sc Z.S. Karnin and A. Shpilka}, {\em Black Box Polynomial Identity Testing of Generalized Depth-3 Arithmetic Circuits with Bounded Top Fan-In},
In Proceedings of the 23rd Annual Conference on Computational
Complexity (CCC), 280-291. 2008.
\bibitem{KS07}{\sc N. Kayal and N. Saxena}, {\em Polynomial identity testing for depth 3 circuits}, Computational
Complexity, 16(2):115-138, 2007.
\bibitem{KS}{\sc A.R. Klivans and D. Spielman}, {\em Randomness efficient identity testing of multivariate polynomials},
In Proceedings of the 33rd Annual ACM Symposium on
Theory of Computing (STOC), 216-223, 2001.
\bibitem{LV}{\sc D. Lewin and S. Vadhan}, {\em Checking polynomial identities over any field: towards a derandomization?},
In Proceedings of the 30th Annual ACM Symposium on Theory of
Computing (STOC), 438-447, 1998.
\bibitem{LP}{\sc L. \begin{itshape}Lov$\acute{a}$\end{itshape}sz and M.D. Plummer}, {\em Matching Theory}, Elsevier,
1986.
\bibitem{RS}{\sc R. Raz and A. Shpilka}, {\em Deterministic polynomial identity testing in non commutative
models}, Computational Complexity, 14(1):1-19, 2005.
\bibitem{SV11}{\sc S. Saraf and I. Volkovich}, {\em Black-box identity testing of depth-4 multilinear
circuits}, ECCC, No.46, 2011.
\bibitem{SAX08}{\sc N. Saxena}, {\em Diagonal circuit identity testing and lower
bounds}, In Proceedings of 35th International Colloquilum on
Automata, Languages and Programming (ICALP), 60-71, 2008.
\bibitem{SAX}{\sc N. Saxena}, {\em Progress on Polynomial Identity Testing},
\url{http://www.math.uni-bonn.de/~saxena/papers/pit-survey09.pdf}
\bibitem{SCH}{\sc J. T. Schwartz}, {\em Fast Probabilistic Algorithms for Verification of Polynomial Identities}, JACM, 27(4):701-717,1980.
\bibitem{SHE}{\sc A. Shen}, {\em $IP=PSPACE$: Simplified Proof},
JACM, 39(4):878-880, 1992.
\bibitem{SY}{\sc A. Shpilka and A. Yehudayoff}, {\em Arithmetic circuits: A survey of recent results and open questions},
Foundations and Trends in Theoretical Computer Science,
5(3-4):207-388, 2010.
\bibitem{ZIP}{\sc R. Zippel}, {\em Probabilistic algorithms for sparse polynomials}, Symbolic and Algebraic Computation, pp. 216-226, 1979.

\end{thebibliography}
\end{document}